\theoremstyle{definition}
\newtheorem{definition}{Definition}
\theoremstyle{plain}
\newtheorem{theorem}{Theorem}
\newtheorem{proposition}{Proposition}
\newtheorem{corollary}{Corollary}
\theoremstyle{remarks}
\newcommand{\PP}{\mathbb{P}}
\newcommand{\Hzero}{\mathcal{H}_0}
\newcommand{\winsize}{m}
\newcommand{\V}{V}
\newcommand{\W}{W}
\numberwithin{equation}{section}
\theoremstyle{plain}
\begin{document}

		\begin{frontmatter}
		\title{Breakpoint based online anomaly detection}
		\runtitle{Breakpoint based online anomaly detection}
		\runauthor{E. Krönert et al.}
		%\author{Etienne Krönert, Alain Celisse, Dalila Hattab}
			\author[A]{\fnms{Etienne}~\snm{Krönert}\ead[label=e1]{etienne.kronert@worldline.com}},
        	\author[A]{\fnms{Dalila}~\snm{Hattab}\ead[label=e3]{dalila.hattab@worldline.com}}
	    	\and
			\author[B]{\fnms{Alain}~\snm{Celisse}\ead[label=e2]{alain.celisse@univ-paris1.fr}}
		
		%%%%%%%%%%%%%%%%%%%%%%%%%%%%%%%%%%%%%%%%%%%%%%
		%% Addresses                                %%
		%%%%%%%%%%%%%%%%%%%%%%%%%%%%%%%%%%%%%%%%%%%%%%
		\address[A]{FS Lab, Financial Services, Worldline, France\printead[presep={,\ }]{e1,e3}}
		%	\address[B]{Modal Team, Inria, France }
		\address[B]{SAMM, Paris 1 Panthéon-Sorbonne University, France\printead[presep={,\ }]{e2}}

		\begin{abstract}
			This paper proposes a new online anomaly detector for time series. Classically, anomaly detectors do not adapt well in real time to changes in the reference distribution. The novelty of our approach is to use breakpoint detection to adapt online to the new reference behavior of the time series. The statistical performance of the detector is theoretically ensured by a control on the FDR. The anomaly detector is empirically evaluated in depth to assess its capabilities and limitations.
		\end{abstract}
		\begin{keyword}[class=MSC]  
			\kwd[62L10]{}
		\end{keyword}
		\begin{keyword}
			\kwd[Anomaly detection, ]{}
			\kwd[Time series, ]{}
			\kwd[Breakpoint detection, ]{}
			\kwd[FDR]{}
		\end{keyword}
	\end{frontmatter}

    \bigskip\bigskip
    	\emergencystretch 3em
	\section{Introduction}\label{sec:intro}
Anomalies refer to observations that appear so atypical when compared to the others that it suggests they originate from a different process \cite{hawkins1980}. Anomaly detection has many applications, such as security \cite{PangANDEAAnomalyNovelty2022} or health \cite{LiSurveyHeartAnomaly2020}.
Machine learning is widely used in anomaly detection \cite{chandola2009anomaly}.
A model \cite{goldstein2016comparative} learns unsupervised the reference behavior from historical data. Then, the learned reference behavior is compared to the observed data to raise an alarm if the difference is too large.  
The variety of anomaly detectors presented in the reviews \cite{PangANDEAAnomalyNovelty2022, blazquez2021review} is useful to adapt to different patterns in time series such as trend, seasonality, and autocorrelation. 

The limitation of this approach is that the reference model is learned only once on the historical dataset, which assumes that the reference of the time series is the same over time.
However, there are data drifts where the reference behavior of the time series changes. If the model is not updated, the data points observed after the drift are detected as false positives.
To overcome this problem, most popular strategies consist of periodically retraining the model on a fixed-length window of data. Others use a sliding window of fixed length to continuously learn the reference. For example, Random Cut Forest \cite{guha2016robust} is a method inspired by Isolation Forest and adapted to real time. DiLOF \cite{na2018dilof} adapts LOF for real time. Periodic retraining and fixed length sliding windows do not account for the true dynamics of the time series.  

Second, a poorly calibrated anomaly detector can lead to alarm fatigue. An overwhelming number of alarms desensitizes the people tasked with responding to them, leading to missed or ignored alarms or delayed responses \cite{cvach2012monitor, blum2010alarms}. One of the reasons for alarm fatigue is the high number of false positives that take time to resolve \cite{solet2012managing, lewandowska2023determining}. To reduce alarm fatigue, it is necessary to theoretically ensure that the number of false positives is low.
In the article \cite{MarySemisupervisedmultipletesting2022, marandon2022}, the FDR in offline anomaly detection is controlled by the Benjamini-Hochberg (BH) procedure \cite{benjamini1995,Benjaminicontrolfalsediscovery2001}. However, this does not apply to the online case. In our previous article \cite{kronert2023fdr} we obtained control of the FDR using a multiple testing procedure.This previous article was limited to stationary time series, while this article assumes that the distribution of the time series can shift. The previous article provides a good introduction to the tools used in this article.

\begin{figure}[!b]
	\centering
	\includegraphics[width=0.6\linewidth]{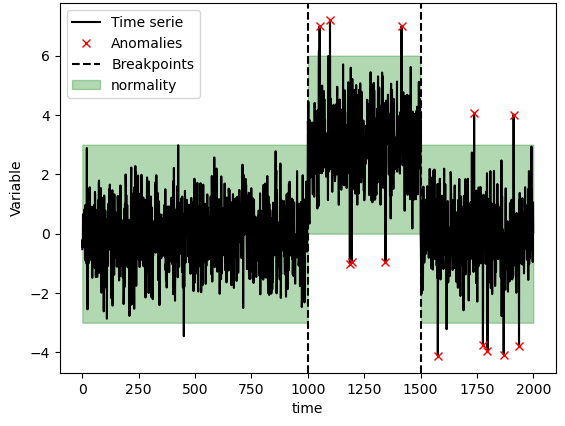}
	\caption{Anomaly detection based on breakpoints.}
	\label{fig:context-our-bkpts}
\end{figure}

This paper introduces a new anomaly detector that can update the learned reference behavior in an online context. As shown in Figure~\ref{fig:context-our-bkpts}, the main idea is to use a breakpoint detector to detect changes in the reference behavior of the time series. Breakpoints are the points at which a property of the time series changes. Between two breakpoints, the data form a homogeneous segment whose characteristics are easy to learn. After detecting the breakpoints in the time series, an atypicity score can be constructed by measuring the conformity of each point to its segment. The final step is to classify as anomalies the points with an atypicity score that is too high.  Note that unlike the proposals cited in \cite{aleneziDenialServiceDetection2013, fischRealTimeAnomaly2020}, the breakpoints do not correspond to anomalies, but to changes in the reference distribution. The use of a breakpoint detector introduces new difficulties, which are addressed in this paper. First, the detection of a breakpoint may be delayed, leading to temporary errors in segment assignment. Second, when a segment contains few points, it is difficult to estimate its behavior, generating anomaly detection errors. In an online context, this is particularly the case when points are observed just after a new breakpoint. This paper responds to these difficulties by assigning a confidence score to the estimation made by the detector. This score is used to judiciously select the estimates to be updated when their assigned confidence is too low. This confidence score is learned from a historical data set.

The anomaly detector presented in this paper comes with theoretical guarantees.
In a recent work of ours \cite{kronert2023fdr}, a new strategy has been designed in the online context to control the FDR for stationary series using a modified version of Benjamini-Hochberg applied to subseries. In the present paper, this work is extended to the nonstationary case.

{\bf The main contributions of this paper are summarized as follows:}
\begin{itemize}
	\item A versatile online anomaly detector based on breakpoint detection is built to adapt to changes in the reference behavior of the time series. Each component of the detector is studied in depth to provide the best possible parameters and improve the performance of the anomaly detector.
	\item The detector is theoretically studied to demonstrate its ability to control the FDR of the entire series at a level $\alpha$, under ideal hypotheses.
	\item The notions of active set and calibration set are introduced to deal with the difficulties of the online nature of the anomaly detector.
	\item The anomaly detector is empirically evaluated in numerous scenarios to determine its capabilities and limitations.
\end{itemize}

In Section~\ref{sec:method-description}, the problem of anomaly detection on piecewise iid time series is introduced, and the anomaly detector is described. In Section~\ref{sec:th-results} the main theorems are presented. The ability of the detector to control the FDR is empirically assessed in Section~\ref{sec:emp-reslt}. Finally, the detector is compared against competitors in Section~\ref{sec:evaluation-competitors}.

\section{Anomaly detection based on breakpoint detection}\label{sec:method-description}
This section introduces the new anomaly detector. First, the problem of anomaly detection in time series containing breakpoints is introduced in Section~\ref{sec:modelisation-problem}. Then, Section~\ref{sec:high-level} gives a high-level description of the detector. Finally, some design choices regarding the detector are discussed in Sections~\ref{sec:deal-unstationarity} and~\ref{sec:intro-uncertainty}.
\subsection{Modeling of the problem}
\label{sec:modelisation-problem}
Let $(\Omega, \mathcal F, \mathbb{P})$ be a probability space, with $\Omega$ the set of all possible outcomes, $\mathcal F$ a $\sigma$-algebra on $\Omega$ and $\mathbb P$ a probability measure on $\mathcal F$. Assume a realization of the independent random variables $(X_t)_{1\leq t\leq T}$, with $X_t$ taking values in a set $\mathcal X$ for all $t$. $T\in \mathbb{N}\cup\lbrace \infty \rbrace$ is the length of the time series. 
Normality is a concept that is dependent on a context that changes over time. 
The instants at which the reference distribution changes are called breakpoints. Supposing there are $D$ breakpoints where $D\in \mathbb N \cup \lbrace\infty \rbrace$, the position of the breakpoints is noted $(\tau_i)_1^D\in [1,T]^D$.
Unlike the modeling in \cite{kronert2023fdr}, which introduced a single reference distribution for the entire series, to model these different reference behaviors, several reference probability distributions are introduced and noted $\mathcal P_{0,i}$.
For each segment $i$ in $\llbracket 1, D\rrbracket$, for each point $t$ in this segment $\llbracket \tau_i, \tau_{i+1}-1\rrbracket$, the observation $X_t$ is called ``normal'' if $X_t \sim \mathcal P_{0,i}$. Otherwise $X_t$ is an ``anomaly''. 
Between two consecutive breakpoints, all ``normal'' observations are generated by the same law defining a homogeneous segment. The time series $(X_t)$ is piecewise stationary (except on anomalies).
As illustrated in Figure~\ref{fig:bkptad}, an observation $X_t$ is an anomaly if it is not generated from the reference distribution corresponding to the current segment. Figure~\ref{fig:bkptad} shows two anomalies detected in the second segment between breakpoints $\tau_2$ and $\tau_3$. Four anomalies have been detected in the last segment 3.

\begin{figure}[tb]
	\centering
	\includegraphics[width=0.6\linewidth]{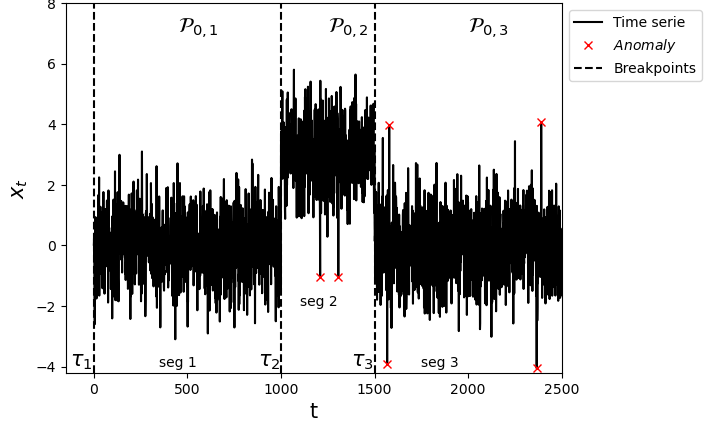}
	\caption{Illustration of piecewise stationary time series.}
	\label{fig:bkptad}
\end{figure}

The aim of an online anomaly detector is to find all anomalies among the new observations along the time series $(X_t)_{t\geq 1}$: for each instant $t>1$, a decision is taken about the status of $X_t$ based on past observations: $(X_s)_{1\leq s \leq t}$. 
Let $\mathcal H^{0}$ be the set of all normal data of the time series, $\mathcal H^{1}$ be the set of all abnormal data and $\mathcal R$ be the set of all data points detected by the anomaly detector. The FDR (resp. FNR) introduced in Section~\ref{sec:intro} can be expressed as the expectation of the False Discovery Proportion (FDP) (resp. False Negative Proportion (FNP)):
\begin{align*}
	FDR_1^T &= \mathbb E[FDP_1^T] = \mathbb{E}\left[ \frac{|\mathcal H^{0} \cap \mathcal R|}{|\mathcal R|\vee 1}\right]\\
	FNR_1^T &= \mathbb E[FNP_1^T] =  \mathbb{E}\left[ \frac{|\mathcal H^{1} \cap \mathcal R|}{|\mathcal H^{1}|\vee 1}\right]
\end{align*}
The control of the FDR at a targeted level $\alpha$ can be expressed by $FDR_1^T \leq \alpha$. In the following, the construction of an anomaly detector that controls the FDR at a desired level while minimizing the FNR is studied, in the case of piecewise stationary time series.

\subsection{High level description of the method}
\label{sec:high-level}
Using the statistical framework introduced in Section~\ref{sec:modelisation-problem}, the BreaKpoint detection based Anomaly Detector (BKAD) is introduced in Algorithm~\ref{alg:generalise-sbad} through the following steps.
\begin{enumerate}
	\item $\textbf{Breakpoint detection}$: A breakpoint detector estimates the number, $\hat D_t$, and the locations, noted $\hat \tau(t)_1, \ldots, \hat \tau(t)_{\hat D_t}$,  of breakpoints in the current time series $X_1^t = (X_1, \ldots, X_t)$. Consequently, the segments formed by two consecutive breakpoints are expected to be homogeneous. In particular, the segment formed between the last breakpoint noted $\hat b_t$ and the last observed point $t$ is called the current segment. With each new observation, the position of all the breakpoints is estimated again. In this way, a breakpoint estimated at one instant $t$ may disappear the next instant.
	Thanks to dynamic programming, the computational cost of estimating all breakpoints is limited.  For more precision, see Appendix~\ref{sec:breakpoint}.
	\item $\textbf{Active set selection}$: At this stage, the points whose status is to be reevaluated are selected. This set of points is called the active set. In the current segment, the points whose confidence in the previously evaluated status is too low are selected. The status of the other points remains the same as in the previous step. Two types of uncertainty are considered.  First,  uncertainty about the value of the atypicity score on short-length segments, if the current segment is shorter than the minimal requirement $\ell_\eta$, the active set contains the entire current segment. Second, uncertainty about the location of the breakpoints for observations that are too recent. Otherwise it contains only the last $\lambda_\eta$ data points whose segment assignment is uncertain. The values of $\ell_\eta$ and $\lambda_\eta$ are derived by $\hat f_d$ and $\hat f_\tau$. Methods for estimating $\hat f_\tau$ and $\hat f_d$ are described in Appendix~\ref{sec:delay-proba} and Appendix~\ref{sec:stable-seg-change}.
	\item $\textbf{Calibration set selection}$: The calibration set is used to calculate the $p$-values. Therefore, the calibration set should contain points that are representative of the reference behavior. Ideally, only points from the current segment should be used. But when the current segment doesn't contain enough points, points from other segments are used. To limit the bias caused by the introduction of points from another distribution, segments most similar to the current one are selected. The similarity between segments is measured using the similarity function $sim$. See Appendix~\ref{sec:calibration} for more details.
	\item $\textbf{Atypicity Score}$: As described in Appendix~\ref{sec:score}, a score $a:\mathcal X \rightarrow \mathbb R$ is a function reflecting the atypicity of an observation $X_t$, it aims to give a high value to anomalies. It is defined as a non conformity measure to the segment. The Nonconformity Measure $\overline a$, is a real valued function $\overline a(z, B)$ that measures how different $z$ is from the set $B$. A nonconformity measure can be used to compare a data point to the rest of the segment.
	\begin{align*}
		s_{u,t} = a(X_u) = \overline a(X_u, Seg_t(u)) \in \mathbb{R} 
	\end{align*}
	where $Seg_t(u)$ is the unique homogeneous segment that contains $X_u$, at time $t$. The NCM must be carefully chosen to be robust to the presence of anomalies in the current segment and to distinguish anomalies even with few points in this segment. 
	\item $\textbf{$p$-value estimator}$: The value of the atypicity score cannot be interpreted directly. The atypicity score assigned to a data point is compared with those assigned to the points in the calibration set. The probability of observing a normal data point with an atypicity score $a(X)$ greater than $a(X_t)$ is estimated. This is done using the empirical $p$-value estimator and the calibration set. See Appendix~\ref{sec:threshold} for more details. 
	\begin{align*}
		\hat p_e(s_{u,t}, \mathcal S^{cal}_t) = \frac{1}{|\mathcal S^{cal}_t|}\sum_{s \in \mathcal S^{cal}_t}\mathbb{1}[s > s_{u,t}]
	\end{align*}
	\item $\textbf{Threshold Choice}$: In order to control the FDR of the complete time series, the data-driven threshold is calculated from the empirical p-values of the active set. A multiple testing procedure, inspired from Benjamini-Hochberg, is applied to determine this detection threshold.  See Appendix~\ref{sec:threshold} for more details. This procedure was introduced in \cite{kronert2023fdr}. Abnormal status ($d_{u,t}=1$) is assigned to data points with a $p$-value below the threshold. 
	
\end{enumerate}

\begin{algorithm}[p]
	\begin{algorithmic}[1]
		\Require Let $T>0$ be the time series length, $(X_t)_1^T$ be the time series, $breakpointDetection$ implements breakpoint detector, $\hat f_\tau$ estimate the probability of segment assignment change and $\hat f_d$ are estimate the probability of status change when the breakpoint do not change, $sim$ a similarity function between segments, $\overline a$ is a non conformity measure, $\hat p_e$ implements the empirical $p$-value estimator and $\hat \varepsilon$ selects the best threshold to be applied.
       \State $\hat\ell_\eta \gets \arg\min\left\lbrace \ell, \hat f_\tau(\ell) < \eta\right\rbrace$ 
        \State $\hat\lambda_\eta \gets \arg\min\left\lbrace \lambda, \hat f_d(\lambda) < \eta\right\rbrace$
		\For{$t=1$ to $T$}
		\State $\hat \tau(t) \gets \text{breakpointDetection}(X_1^t)$ \Comment{Detection of the breakpoints}
		\State $b_t \gets \hat \tau(t)_D$
		\If{$t-\hat b_t \leq \hat\ell_\eta$}  \Comment{Definition of the active set}
		\State $m_t = t-\hat b_t$
		\Else
		\State $m_t = min(t-\hat b_t, \hat\lambda_\eta)$
		\EndIf
		\State $\mathcal{I}^{active} = \lbrace X_{t-m_t}, X_{t-m_t+1},\ldots, X_t\rbrace$
		\For{$i=1$ ito $\hat D_t$}   \Comment{Definition of the calibration set}
		\For{$u=\hat\tau_{i}(t)$ to $\hat\tau_{i+1}(t)$}
		\State $sim_u \gets sim(X_{\tau_i(t)}^{\tau_{i+1}(t)-1}, X_{\hat b_t}^t)$
		\EndFor
		\EndFor
		\State $sortedU = sort(\llbracket 1 ,t\rrbracket, sim)$
		\State $filteredU = filter(u \in sortedU, d_{u,t-1}=0)$
		\State $\mathcal I^{cal} \gets \lbrace filteredU_i, i \in \llbracket 1, n \rrbracket\rbrace$ 
		\State $\mathcal S^{cal} \gets \lbrace \overline a(X_u, Seg(u)), u \in \mathcal I^{cal}\rbrace$
		\For{$u$ in $\mathcal I^{active}$} \Comment{Computation of the scores}
		\State $s_{u,t} \gets \overline a(X_u, Seg(u))$
		\EndFor
		\For{$u$ in $\mathcal I^{active}$}   \Comment{Estimation of the $p$-values}
		\State $\hat p_{u,t} =  \hat p_e(s_u, \mathcal{S}^{cal})$
		\EndFor
		\State $\hat \varepsilon_t = \hat \varepsilon(\lbrace \hat p_{u,t}, u \in \mathcal I^{active}\rbrace)$ \Comment{Threshold estimation enabling FDR control}
		\For{$u$ in $\mathcal I^{active}$}
		\If{$\hat p_{u,t} < \hat\varepsilon_t$}  \Comment{Computation of the status}
		\State $d_{u,t} = 1$
		\Else
		\State $d_{u,t} = 0$
		\EndIf
		\EndFor
		\For{$u$ in $[1,t]\backslash\mathcal I^{active}$}
		\If{$t-\hat b_t<m $ and $u \geq \hat b_t -m$}   \Comment{Segment closed}
		\State $d_{u,t} = d_{u,\hat b_t}$
		\Else
		\State $d_{u,t} = d_{u,t-1}$
		\EndIf
		\EndFor
		\EndFor
		\\{\bf Output:} $(d_{t,T})_{t=1}^T$ boolean list that represent the detected anomalies.
	\end{algorithmic}
	\caption{Breakpoints based anomaly detection}
	\label{alg:generalise-sbad}
\end{algorithm}

\newpage
\begin{figure}[tb]
	\centering
	\includegraphics[scale=0.5]{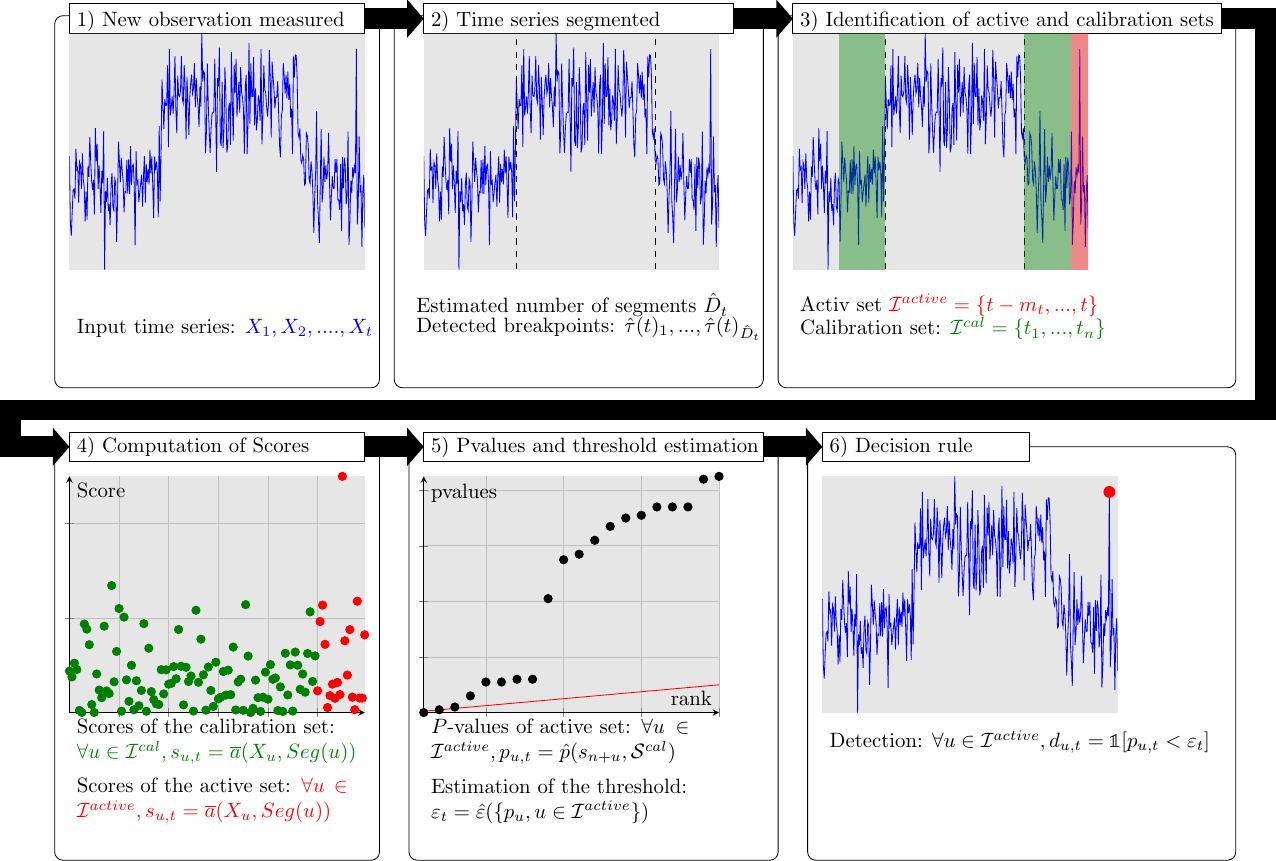}
	\caption{Description flow of Algorithm \ref{alg:generalise-sbad}. \label{tikz:win}}
\end{figure}

Algorithm~\ref{alg:generalise-sbad} is illustrated in Figure~\ref{tikz:win}, the description of the flow is given as the following:
\begin{enumerate}[label=Step~\arabic*,start=0]
	\item (not illustrated): the minimum number of points $\ell_\eta$ that a segment must contain to ensure that the atypicity score is estimated with sufficient accuracy is estimated. Similarly, the minimum delay $\lambda_\eta$ to ensure with high probability that the assignment of a point to a segment does not change is estimated.
	\item: for each time step $t$, a new data point $X_t$ is observed.
	\item: the current time series is segmented $\hat \tau(t)$. Each segment is homogeneous.
	\item: the data points having a status with low confidence are identified to build the active set. If the current segment is shorter than the minimal requirement $\ell_\eta$, the active set contains the entire current segment. Otherwise it contains only the last $\lambda_\eta$ data points whose segment assignment is uncertain. The calibration set is built by sorting the data points according to the similarity, then $n$ data points with the highest similarity are added to the calibration set.
	\item: The calibration set and active set data points are scored, using the non conformity measure $\overline a$.
	\item: The $p$-values of the active set are estimated using the calibration set.
	The multiple testing procedure is applied to the active set to obtain the data-driven threshold, in the figure the threshold is chosen using the Benjamini-Hochberg procedure.
	\item: A decision is made to give the abnormal status to the data point with a $p$-value lower than the threshold. For points outside the active set, their status remains the same as in the previous step. If a current segment has less than $m$ points, it is considered a new segment. In this case, the previous segment has just been closed by a new breakpoint. The status of the data point preceding the new breakpoint is updated using the most relevant historical status. This status is the last one before observing the data of the current segment and biasing the status estimation ($d_{u,t}=d_{u,\hat b_t}$).  
\end{enumerate}
The modularity of our method allows a better adaptation to the diversity of time series.
Two design choices are discussed: the use of training, calibration, and test sets in Section~\ref{sec:deal-unstationarity} and the need to re-evaluate estimations in Section~\ref{sec:intro-uncertainty}.
\subsection{Training, calibration and test set.}
\label{sec:deal-unstationarity}
Let's look back at some of the design choices made for the anomaly detector to understand the rationale behind them.
Emphasis is placed on the differences between this anomaly detector and one that analyzes stationary data.
First, let's look at the three time series subsets that play a role in anomaly detection. The following is a description of each:
\begin{itemize}
	\item  \textbf{Training set}: 
	This is the set of points used as a reference by the atypicity score. The atypicity score $a$ compares the observation $X_t$ to the \emph{training set} $\mathcal{X}^{train} = \lbrace X_1,....,X_q \rbrace$. The more $X_t$ deviates from the points in the training set, the more the abnormality score $s_t = a(X_t, \mathcal{X}^{train})$ is high.
	\item \textbf{Calibration set}:
	The calibration set is used by the $p$-value estimator to calibrate the score obtained.
	A $p$-value estimator $\hat p$, based on a \emph{calibration set} of scores $\mathcal S^{cal} = \lbrace s_{t-m-n},...,s_{t-m}\rbrace$ containing scores of data points generated from $\mathcal P_0$, to estimate the $p$-value, $\hat p_t = \hat p(s_t, \mathcal S^{cal})$. 
	\item \textbf{Test set}: The value of the threshold $\varepsilon$ can be chosen to be data driven to allow better control of the False Discovery Rate (FDR). The threshold is estimated using a subseries of $p$-values, $\lbrace p_{t-m+1},\ldots, p_{t}\rbrace$, called the test set. This denomination has been chosen to correspond to that of the offline case \cite{marandon2022}.
\end{itemize}

In case of stationary data, the training set and calibration set are either chosen from the start of the time series labeled with anomalies or evolve over time using sliding windows. When the training set cannot be labeled, a robust atypicity score is required. 
An example of a training set, calibration set and test set, in the context of online anomaly detection is shown in the following:
\begin{align*}
	\underbrace{X_1,\ldots,X_{q}}_{\mbox{Training set}},\ldots \underbrace{X_{t-n-m},\ldots,X_{t-n}}_{\mbox{Calibration set}},\underbrace{X_{t-m},\ldots,X_{t}}_{\mbox{Test set}}
\end{align*}
For each new observation $X_t$, the function $a$ is used to measure the atypicity of the point relative to the training set. The value of the score cannot be interpreted directly because the distribution of the scores under $\mathcal H_0$ is unknown. So its $p$-value is estimated using the calibration set. The more the data point is atypical, the closer the $p$-value is to $0$. The data-driven threshold is estimated from the test set.

Suppose this strategy used for stationary data is applied to a time series where a shift in the mean of the reference distribution occurs. Before the first shift, there are no differences with the stationary case. After the shift, all data points appear as anomalies when using the scoring function trained on the initial training set based on data before the shift. To adapt to the shift, the training and the calibration sets have to be rebuilt on the new segment of data in order to reapply the anomaly detector. 
\begin{align*}
	\underbrace{X_1,\ldots,X_{\tau_1}}_{\mbox{Segment 1}}, \underbrace{\underbrace{X_{\tau_1+1},\ldots,X_{\tau_1+q}}_{\mbox{train}},\underbrace{X_{\tau_1+q+1},\ldots,X_{\tau_1+q+n}}_{\mbox{calibration}},\underbrace{X_{\tau_1+q+n+1},\ldots, X_t}_{\mbox{test}}}_{\mbox{Segment 2}}
\end{align*}
However, it would take a lot of time to gather enough data for the training and calibration sets. This is the reason why two improvements are suggested.
The first improvement in the case where the score is stationary across different segments, data for the calibration set can be taken from previous segments. For example, suppose the shift occurs in the mean  and the score is the $z$-score: $(x-\mu)/\sigma$.
\begin{align*}
	\underbrace{\underbrace{X_{1},\ldots,X_{q}}_{\mbox{train}},\underbrace{X_{q+1},\ldots,X_{q+m}}_{\mbox{calibration}}, \ldots}_{\mbox{Segment 1}}, \underbrace{\underbrace{x_{\tau_1+1},\ldots,X_{\tau_1+q}}_{\mbox{train}},\underbrace{X_{\tau_1+q+1},\ldots,X_t}_{\mbox{test}}}_{\mbox{Segment 2}}
\end{align*}
Furthermore, if the scoring function is robust to the presence of anomalies inside the training set, the training can contain little amount of anomalies. The whole segment can be used as training set. The test set can be part of the training set, using a leave-one-out strategy. The segment length required for anomaly detection can thus be further reduced, this constitute the second improvement. These set constructions appear throughout Algorithm~\ref{alg:generalise-sbad}.
\begin{align*}
	\underbrace{\underbrace{X_{1},....,X_n}_{\mbox{calibration}}, \ldots}_{\mbox{Segment 1 and train}}, \underbrace{\ldots, \underbrace{X_{t-m}......X_t}_{\mbox{test}}}_{\mbox{Segment 2 and train}}
\end{align*}

\subsection{The need to re-evaluate estimates}
\label{sec:intro-uncertainty}
In Algorithm~\ref{alg:generalise-sbad}, unlike classic online anomaly detection, the status assigned to a data point can change over time.
Indeed, if all data points are generated by the same distribution, there's no need to perform error-prone segmentation on the time series. Furthermore, it can be guaranteed that once the series size is sufficient, there are enough points to accurately calculate the atypicity score and the $p$-value. As a result, there's no need to continuously evaluate the status of the points after each observation. The status of a point is estimated at the time of observation and does not change thereafter.

When the time series under study has shifts in its reference distribution, the construction of the training and calibration sets described in the previous section relies on the knowledge of the breakpoint locations. In practice, neither the number of segments $D$, nor the positions of the breakpoints $\tau_i$ nor the laws of the segments $\mathcal P_{0,i}$ are known. All these quantities must be learned using the breakpoint detector and the scoring function to perform anomaly detection.
Moreover, in an online context, the lack of  knowledge of the whole series influences a good estimation of these quantities and has a negative impact on the quality of the detection. With each new observation, different situations may occur: the position of a previous breakpoint may be adjusted or removed, or a new breakpoint may appear. As a result, the segment assigned to a data point changes. These new observations influence the composition of each segment and therefore modify the score value and the status assigned to each point, especially if the segment is small. Consequently, the values associated with a data point $X_u$ change over the time $t$. To reflect this evolution, a subscript $t$ is added. For example, $\hat p_{u,t}$ is the $p$-value estimated for $X_u$ at time $t$. Similarly $d_{u,t}$ is the status of the point $X_u$ at time $t$.
In addition, the concept of the active set is introduced to capture the most recently observed points in an online context, whose ``abnormal'' or ``normal'' status is uncertain as it may evolve due to the introduction of new data points. This uncertainty arises from the possibility that the segment to which a point is assigned may change over time, or from the estimation of scores on small segments.

Having described the anomaly detector in this section, the following Section~\ref{sec:th-fdr} examines the anomaly detector theoretically.
\section{Theoretical results}\label{sec:th-results}
After describing the anomaly detector in Section~\ref{sec:method-description}, the main theoretical results are given in this section. It is shown in Section~\ref{sec:th-fdr} that under ideal assumptions BKAD can control the FDR to a desired level $\alpha$.  Section~\ref{sec:th-uncertainty} shows that the proportion of errors committed by the detector due to the online context can be controlled to a desired level by correctly building the active set. Section~\ref{sec:hypo-disc} discusses the validity of the ideal hypotheses.
\subsection{Control of the FDR}\label{sec:th-fdr}
We have studied the control of the FDR of an online detector in the case of iid time series (without breakpoints) in a previous article \cite{kronert2023fdr}. In this section, the results are extended to the case where the time series has breakpoints.
The various assumptions involved in the control of the FDR are introduced, followed by the presentation of the theorem.

The first hypothesis concerns the generation of the true anomalies. To be able to control the FDR of the whole time series from a control on subseries, it is necessary that the proportion of anomalies in subseries is the same as to the rest of the series. The classical assumption is that the data points are generated by a mixture of a reference distribution and an alternative distribution \cite{huber1992robust, staerman2022functional} .
\begin{definition}\label{def:uniform-pi}[Time series with uniform proportion of anomalies]
	Let $D$ be the number of segments. Let $\tau_1,\ldots,\tau_D$ be the breakpoint locations. Let $\mathcal P_{0,1}, \ldots, \mathcal P_{0,D}$ be the reference distributions and $\mathcal P_{1,1}, \ldots, \mathcal P_{1,D}$ be the alternative distributions. Let $\pi$ be the proportion of anomalies. 
	A time series is said to have a uniform proportion of anomalies if $(A_u)$ the series describing anomaly locations and $(X_u)$ the series of observations are generated as follows:
	\begin{align}
		\forall i \in \llbracket 1, D\rrbracket, \forall u \in \llbracket \tau_{i}, \tau_{i+1}-1 \rrbracket,\qquad A_u \sim Ber(\pi) \mbox{ and } X_u \sim \begin{cases}
			\mathcal P_{0,i},& \mbox{if } A_u=0\\
			\mathcal P_{1,i},& \mbox{if } A_u=1
		\end{cases}
	\end{align}
\end{definition}

The second key assumption for FDR control is that breakpoints must be identified without error. However, in an online context, breakpoint detection is subject to some time delay. To account for this, it is assumed that there may be errors in the most recent observations, but that beyond $\lambda^*$ data points, all breakpoints are correctly detected.
\begin{definition}\label{def:ideal-bkpt}[Ideal breakpoint detector with delay $\lambda^*$]
	Let $\tau$ be the true segmentation.
	Let $\hat \tau$ be the breakpoint detector, where for all $t$, $\hat \tau(t)$ is the estimated segmentation of $X_1, \ldots, X_t$. 
	$\hat\tau$ is called an ideal breakpoint detector with delay $\lambda^*$ if the true segmentation is found with delay $\lambda^*$.
	\begin{align}\label{assume.segmentation}
	 \forall t \in \llbracket 1, T \rrbracket,\quad	\llbracket 1, t-\lambda^*\rrbracket \cap \hat\tau(t) = \llbracket 1, t-\lambda^*\rrbracket \cap \tau\tag{{\bf Segmentation}}
	\end{align}
\end{definition}

As discussed in Section~\ref{sec:deal-unstationarity}, it is desirable for the computed scores to follow the same distribution for all segments to correctly estimate the $p$-value.
For example, if each segment $i$ follows a reference distribution $\mathcal N(\mu_i,1)$, then only the mean changes at the breakpoints and the oracle score $\tilde a(X_u,i) = |X_u - \mu_i|$ is iid. In practice, however, the mean $\mu_i$ is not known, so the empirical mean of the segments is used. In doing so, the independence property between the scores is lost. 
But since $\hat\mu_i$ converges to $\mu_i$, it can be assumed that for a segment of sufficient length, the scores can be considered iid. This idea is formalized by the property ``Score idd for minimal segment length''.
\begin{definition}\label{def:score-homogeneity}[Score idd for minimal segment length]
	Let $(X_u)$ be a time series satisfying Definition~\ref{def:uniform-pi}. 
	Assumption \ref{assume.score} assumes that there exist an oracle score  $\tilde a$, such that $\tilde a(X_u, i_u) = s_u$ is iid, where $i_u$ is the number of the segment to which $u$ belongs. 
	Furthermore \ref{assume.score} assumes there is a non conformal measure $\overline a$ and an integer $\ell^*$ such that:
	\begin{align}\label{assume.score}
		\forall i \in \llbracket 1,D\rrbracket, &\forall u_1, u \in \llbracket\tau_i,\tau_{i+1}-1\rrbracket, \quad |\tau_i-u_1|\geq \ell^*,\notag\\
		& \quad \overline a(X_u, \lbrace X_{\tau_i},\ldots,X_{u_1}\rbrace)=\tilde a(X_u,i)\tag{{\bf Score}}
	\end{align}
\end{definition}
This property is verified if, on the one hand, there is an oracle score whose distribution is iid. And secondly, the non-conformity measure must converge towards this oracle score.

Finally, an ideal version of BKAD is introduced to facilitate theoretical study. 
The ideal BKAD algorithm is described in the following Definition~\ref{def:ideal-bkad}. This is an ideal version of the algorithm presented in Algorithm~\ref{alg:generalise-sbad}, assuming no computational constraints and that the true labels are known when building the calibration set.
At each time step $t$, the scores and $p$-values are updated with information from the new observed data point, then the $d_{u,t}$ status is changed in two cases: 
\begin{itemize}
	\item for the most recent observations,
	\item When a new segment is detected, the status of the last points of the closed segment is updated.
\end{itemize}
In other cases, $d_{u,t}$ keeps its value computed at the previous instant.

\begin{definition}\label{def:ideal-bkad}[Ideal BKAD]
	Let $\lambda'$ and $\ell'$ be two parameters.
	Noting $m=max(\lambda', \ell')$, for each $t$ in $\llbracket 1, T\rrbracket$, the series of scores $(s_{u,t})$, $p$-values $(p_{u,t})$ and decision $(d_{u,t})$ of the ideal BKAD are calculated as the following.
	
	First, the sequence of scores is computed as follows. The calculation is presented separately for the segments identified between two breakpoints and for the current segment:
	\begin{align}
			\forall i \in \llbracket 1, \hat D_t-1\rrbracket, &\forall u \in \llbracket \hat \tau_{i}(t) , \hat \tau_{i+1}(t)-1 \rrbracket,\notag\\
			\quad s_{u,t} &= \overline a(X_u, \lbrace X_{\hat\tau_{i}(t)},\ldots,X_{min(\hat\tau_{i+1}(t)-1,t-m)} \rbrace)\\
      \forall u \in \llbracket \hat b_t , t \rrbracket,\quad s_{u,t} &= \overline a(X_u, \lbrace X_{\hat b_t},\ldots,X_{t-m} \rbrace)\label{eq:score-ideal-bkad}
	\end{align}
	
	Then, the sequence of $p$-values is computed as follows: $\hat p_{u,t} = \hat p_e(s_{u,t}, \mathcal S_t)$.
	With $\mathcal S_t$ be the calibration at time step $t$, it is computed as follows:
	\begin{align}
		\mathcal S_t = \lbrace s_{h(t-m,1),t},\ldots, s_{h(t-m,n),t}\rbrace\label{eq:pvalue-ideal-bkad}
	\end{align}
	The calibration set is a sliding window containing the $n$ previous scores generated according to the reference distribution. For each $t$ and $i$, $h(t,i)$ gives the $i$-th observation lower than $t$ that satisfies the $\Hzero$ hypothesis.
	
	Finally,  $(d_{u,t})$ the series of decisions, is computed as follows:
	\begin{itemize}
		\item The status of the most recent observations is updated:
		\begin{align}
			\forall u \in \llbracket max(t-m,\hat b_t),t\rrbracket,\qquad d_{u,t}=\mathbb{1}[p_{u,t}<\hat \varepsilon(p_{t-m,t}, \ldots, p_{t,t})] \label{eq:end-seg-ideal-bkad}
		\end{align}
		\item If needed, the status of the last points of the previous segment is updated:
		\begin{align}
			\forall u \in \llbracket \hat b_t-m,\hat b_t-1\rrbracket,\qquad d_{u,t}=\mathbb{1}[p_{u,t}<\hat \varepsilon(p_{\hat b_t-m,t}, \ldots, p_{\hat b_t-1,t})]\label{eq:closing-seg-ideal-bkad}
		\end{align}
		\item The status of other data points remains unchanged.
		\begin{align}
			d_{u,t} = d_{u,t-1} \label{eq:continu-ideal-bkad}
		\end{align}
	\end{itemize}
\end{definition}
The detector associates each observed data point $X_u$ with a status $d_{u,t}$ that can evolve according to the number of observed points $t$. The goal of the ideal detector is that the $d_{u,t}$ value converges to a final $d_{u,T}$ value in a small number of steps and that the final decision series controls the FDR at a desired level $\alpha$ with a minimum of false negatives.
Under assumptions \ref{assume.segmentation} and \ref{assume.score}, the ideal version of BKAD, described in Definition~\ref{def:ideal-bkad}, controls the FDP of the complete series at the level of the mFDR of the subseries of length $m$.
\begin{theorem}[False Discovery Proportion convergence]\label{thm:ideal-detector}
	Let $(X)_{t\geq 1}$ be a time series of infinite size with uniform proportion of anomalies $\pi$ as stated in Definition~\ref{def:uniform-pi}.
	It is assumed that assumptions \ref{assume.segmentation} and \ref{assume.score} are verified. Applying the ideal BKAD with $\lambda'=\lambda^*$ and $\ell'=\ell^*$ on $(X_t)$, and noting $R_a^b$ the number of rejections on a subset $[a,b]$ and $FP_a^b$ the number of false positives on the same subset.
	\begin{align*}
		R_a^b &= \lim_{t \rightarrow \infty}\sum_{u=a}^b d_{u,t}\\
		FP_a^b &= \lim_{t \rightarrow \infty}\sum_{u=a}^b (1-A_u)d_{u,t}
	\end{align*}
	
	Then, the FDP, computed as $FDP_1^t = \frac{FP_{1}^t}{R_1^t}$, converges and its limit can be calculated as follows:
	\begin{align}
		\lim_{t \rightarrow \infty} FDP_1^t = mFDR_1^m
	\end{align}
\end{theorem}

The proof of this theorem is given in Appendix~\ref{appendix:proof-bkad-fdr}.
It follows from this theorem that to control the FDP at a desired $\alpha$ level, it is sufficient to control the mFDR on a subseries of length $m$ at the same level. Theorem~\ref{thm:ideal-detector} is an extension of Theorem 4 in \cite{kronert2023fdr} to time series containing breakpoints where the reference distribution of the time series changes.
 According to \cite{kronert2023fdr}, the modified BH procedure allows to control the mFDR if the $p$-values in the subseries are calculated with a unique calibration set, as stated in \cite{marandon2022}.

\begin{corollary}\label{corollary-fdr-bkad}
	Under the same notations and assumptions as Theorem~\ref{thm:ideal-detector}, let $m$ and $\nu$ be two integers, let $n$ and $\alpha'$ defined by:
	\begin{align*}
		\alpha' = \frac{\alpha}{1 + \frac{1-\alpha}{m\pi}} \mbox{ and }n = \nu m/\alpha'-1
	\end{align*}
	the threshold procedure is the Benjamini-Hochberg procedure with level $\alpha'$, also called the modified Benjamini-Hochberg procedure introduced in \cite{kronert2023fdr}.
	
	The number of data points detected as anomaly by BH in $\llbracket 1,m\rrbracket$ is noted $R_{1}^m$. Similarly, $R_{1}^m(u)$ represents the number of data points detected as anomaly, when $\hat p_{u,t}$ is replaced with 0. Assuming that the following assumption hold (for more precision refer to \cite{kronert2023fdr}):
	\begin{align}
		\mathbb E [R_{1}^m] \approx \frac{\winsize\pi}{1-\alpha}. \mbox{ and } \mathbb E[R_{1}^m(u)] = \mathbb E[ R_{1}^m]+1\label{eq:heuristic.power}
	\end{align}
	 
	 then the FDP of the complete time series is controlled almost surely at the level $\alpha$:
	\begin{align}
		\lim_{t \rightarrow \infty}FDP_1^t = (1-\pi)\alpha
	\end{align}
\end{corollary}

From Corollary~\ref{corollary-fdr-bkad}, the modified Benjamini-Hochberg procedure introduced in \cite{kronert2023fdr} allows to control the FDP at the desired level $\alpha$. In anomaly detection $1-\pi$ is close to $1$ since the number of anomalies is small. 
The almost surely convergence of the FDP implies the control of the FDR at level $\alpha$.
To maximize the performance of the anomaly detector, it is important to carefully choose the cardinality of the calibration set. Indeed, $n$ must be of the form $\nu m/\alpha'-1$ to ensure FDR control. See Section 3.2.2 in \cite{kronert2023fdr} for more details. The paper \cite{kronert2023fdr} conducts experiments to test the validity of the assumptions of Eq.~\ref{eq:heuristic.power}.

FDR control by the anomaly detector is an important property. This result guides the choice of the threshold selection procedure and the tuning of the calibration set cardinality in Algorithm~\ref{alg:generalise-sbad}, see more Appendix~\ref{sec:threshold} for more details.
However, \ref{assume.segmentation} and \ref{assume.score} are strong assumptions, it is not possible to get perfect estimations, the following Section~\ref{sec:th-uncertainty} studies the uncertainty of the estimations.
\subsection{Manage uncertainty of estimations}\label{sec:th-uncertainty}
In Section~\ref{sec:intro-uncertainty}, uncertainty about the location of breakpoints and the value of scores leads to the need to re-evaluate the status of data points. In the previous section, it was specified that under conditions \ref{assume.segmentation} and \ref{assume.score} then the status of points need to be re-evaluated only if the current segment is shorter than $\ell^*$ or the point was observed less than $\lambda^*$ steps ago. Otherwise, since the true breakpoint locations and segment values were known, there was no reason to re-evaluate the data point status. In practice, however, such conditions cannot be verified, and there are always errors in the estimates.  At each step, there is a set of data points whose status must be re-evaluated, called the active set. This raises the question of how to define the active set in a way that minimizes the estimation error of the status while limiting the number of re-evaluations.

Assuming that it is possible to estimate the minimum segment length $\hat\ell_\eta$, which guarantees that the score value is estimated with ``good accuracy'', and the delay $\hat\lambda_\eta$, which guarantees that the data point is ``correctly assigned to a segment'', then the ``correct way'' to construct the active set of data points to re-evaluate is suggested as follows.
As shown in Algorithm~\ref{alg:active-set}, the procedure starts by comparing the length $\ell_t$ of the current segment with the threshold length $\hat \ell_\eta$.
If the length $\ell_t$ is lower than this threshold, the whole segment is considered as the active set since the segment does not contain enough points to estimate the atypicity score with high precision. Otherwise, the segment contains enough points and the source of the status change is segment reassignment.
Considering the data points whose distance to the end of the time series is less than $\hat \lambda_\eta$, the risk of being reassigned to another segment is high.
Consequently, the active set will contain all points that are after  the position $t - \hat \lambda_\eta$. 
In the case $\hat \lambda_\eta$ is larger than the length of the current segment, the calibration set will include the current segment. 
Given $m_t$ the active set cardinality, the active set is equal to:
\begin{align*}
	\mathcal I^{active} = \lbrace t-m_t+1, \ldots, t\rbrace
\end{align*}

\begin{algorithm}[tb]
	\begin{algorithmic}[1]
		\If{$\ell_t < \hat \ell_\eta$}
		\State{$m_t \gets \ell_t$}
		\Else
		\State{$m_t \gets min(\hat\lambda_\eta, \ell_t)$}
		\EndIf\\
		\Return $m_t$
	\end{algorithmic}
	\caption{Computation of active set cardinality.}
	\label{alg:active-set}
\end{algorithm}

The rest of this section will clarify and prove the claims. The first step is to define what is meant by a ``correct estimate''.
Starting from the observation that in an online context, where quantities are estimated knowing only part of the time series, it is impossible to estimate the quantity more accurately than knowing the whole time series. 
For each data point $X_u$, the oracle status is introduced, noted as $\tilde d_u$. This is the status that $X_u$ would have been given by BKAD, assuming that the breakpoint locations and score values were estimated with knowledge of the entire time series. 

\begin{definition}[Oracle status]\label{def:oracle-status}
	The oracle status, noted $\tilde d_u$, is the status of $X_u$ under the hypothesis that the entire time series is known. Therefore, the breakpoint locations are estimated using the entire time series, and the atypicity score values are estimated using the entire segments. With $T$, the length of the full time series (potentially infinite).
	\begin{align}
		s_{u, T} &= a(X_u, \lbrace X_{\hat\tau_{i}(T)},\ldots,X_{\hat\tau_{i+1}(T)} \rbrace)\\
		\tilde d_u &= \mathbb{1}\left[ \hat p_e(s_{u,T}, S_u)<\varepsilon(\hat p_e(s_{u,T}), \ldots, \hat p_e(s_{u+m,T}) ) \right]
	\end{align}
\end{definition}

The oracle status allows to define the the confidence score associated with an estimated status. It is the probability that the estimated status is the same as the oracle status. A ``correct estimate'' is a status associated to a high confidence score. The ``correct way'' to build the active set is to ensure high confidence score on status. This is ensured by Theorem~\ref{thm:change-vs-oracle-prob} for Algorithm~\ref{alg:active-set}.
\begin{definition}[Confidence Score]
	The confidence score $c_{u,t}$ assigns to the decision made for the data point $X_u$, at time $t$, the  probability that it remains the same under the oracle status
	$$c_{u,t} = \PP\left[d_{u,t}=\tilde d_{u}\right]$$
\end{definition}

Now it's time to clarify what is meant by ``good accuracy'' on score and a point ``correctly assigned to a segment''. 
To this end, let's start by recalling how the status of a data point is established and introduce some notations.
As described in Definition~\ref{def:ideal-bkad}, the status of each data point in the current segment is calculated as follows in three steps,
let $\hat b_t$ be the last breakpoint:
\begin{enumerate}
	\item For all $u$ in $\llbracket\hat b_t,t\rrbracket$ compute the atypicity score $s_{u} = \overline a(X_u, \lbrace X_{\hat b_t}, \ldots, X_{t-m})$.
	\item For all $u$ in $\llbracket\hat b_t,t\rrbracket$ compute the $p$-value $p_u = \hat p_e(s_u, \mathcal S_t)$ 
	\item For all $u$ in $\llbracket\hat b_t,t\rrbracket$ compute the status $d_{u,t} = \mathbb{1}[ p_u < \hat \varepsilon(p_{u,t},\ldots, p_{u+m,t})]$.
\end{enumerate}

Various situations can lead to a change in the status of some data point $X_u$. Before describing these situations, it is useful to introduce the following events:
\begin{itemize}
	\item ``The status of data point $X_u$ at step $t$ is different than the oracle status''  
	$$\V_{u,t} = \left\lbrace  d_{u,t} \neq \tilde d_{u} \right\rbrace$$
	\item  ``The segment to which the data point $X_u$ is assigned at time $t$ changes over time''
	\begin{equation}\label{eq:delay}
		\W_{u,t} = \left\lbrace \exists t'> t, \hat\tau(t') \cap \rrbracket\hat b_t,u\rrbracket \neq \emptyset \right\rbrace
	\end{equation}
\end{itemize}

First, if a breakpoint is detected between $\hat b_t$ and $u$, as described by the event $W_{u,t}$, this means that the score associated with $X_u$ has to be computed from a different training set. Similarly, if a breakpoint is detected between $u$ and $u+m$, it means that the data-driven threshold has to be computed from a different subseries. For these reasons, the probability of a point changing its assigned segment $\PP\left[\W_{u+m,t} \right]$ is of interest and if $\PP\left[\W_{u+m,t} \right]$ is low  the point is said to be ``correctly assigned to a segment''. If no breakpoint is detected between $\hat b_t$ and $u+m$, the assignment of points $u$ to $u+m$ remains unchanged. This event is recorded in $\overline W_{\overline u,t}$ with $\overline u = u+m$. Under the condition of the $\overline W_{\overline u,t}$ event, it is possible for the status to be different from the oracle status if the addition of a data point in the current segment modifies the score value:
$s_{u, T} = \overline a(X_u, \lbrace X_{\hat \tau_{i}(T)}, \ldots, X_{\hat \tau_{i+1}(T)})$, with $\llbracket\hat b_t, t-m\rrbracket \subset \llbracket\hat \tau_{i}(T),\hat \tau_{i+1}(T)\rrbracket$. For this reason,  the score is said to be known with a ``good accuracy'' if the probability $\PP\left[\V_{u,t}|\overline{\W_{\overline u,t}} \right]$ is small.
 
The next proposition describes when it is possible to define $\lambda_\eta$ and $\ell_\eta$ to bound the probabilities $\PP\left[\W_{u,t} \right]$ and $\PP\left[\V_{u,t}|\overline{\W_{\overline u,t}} \right]$, as required by Algorithm~\ref{alg:active-set}:

\begin{proposition}[Stationarity]\label{prop:stationarity}
	Let $\eta>0$.
	\begin{itemize}
		\item  Assuming $f_\tau: \lambda \mapsto \PP\left[\W_{t-\lambda,t} \right]$ is decreasing to 0 and does not depend on $t$. 
		
		Then, there exists $\lambda_\eta$ such that:
		\begin{align}
			\forall t\in \llbracket 1, T\rrbracket, \forall u \in \llbracket 1, t\rrbracket, \quad |u-t|\geq \lambda_\eta,\quad  \PP\left[\W_{u,t} \right]\leq  \eta.
		\end{align}
		The smallest value respecting this property is noted $\lambda_\eta^\star$ . 
		%. 
		\item Assuming $f_d: \ell \mapsto \PP\left[\V_{u,t}|\overline{\W_{\overline u,t}},\ell_t=\ell \right]$ is decreasing to 0 and does not depend on $t$. 
		Then, there exist a segment length $\ell_\eta$ such that:
		\begin{align}
			\forall t\in \llbracket 1, T\rrbracket, \forall u \in \llbracket 1, t\rrbracket, \quad \ell\geq \ell_\eta,\quad  \PP\left[\V_{u,t}|\overline{\W_{\overline u,t}},\ell_t=\ell \right]\leq  \eta.
		\end{align}
		The smallest value of $\ell_\eta$ is noted $\ell_\eta^\star$. 
	\end{itemize}
\end{proposition}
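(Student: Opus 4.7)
The plan is to treat both items as immediate consequences of the definition of a decreasing function converging to zero, with the key observation that the assumption ``$f_\tau$ (resp.\ $f_d$) does not depend on $t$'' automatically upgrades pointwise convergence into uniform (in $t$) control, which is precisely what the claimed bounds require.

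For the first item, I would fix $\eta>0$ and consider the set
\[
\Lambda_\eta \;=\; \{\lambda \geq 0 \,:\, f_\tau(\lambda) \leq \eta/2\}.
\]
Since $f_\tau$ is decreasing to $0$, the definition of convergence yields some $\lambda_0$ with $f_\tau(\lambda_0) \leq \eta/2$, so $\Lambda_\eta \neq \emptyset$. Monotonicity of $f_\tau$ then implies that $\Lambda_\eta$ is upper-closed: if $\lambda \in \Lambda_\eta$ and $\lambda' \geq \lambda$, then $f_\tau(\lambda') \leq f_\tau(\lambda) \leq \eta/2$, hence $\lambda' \in \Lambda_\eta$. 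Working with integer-valued $\lambda$ (since $u$ and $t$ are time indices), $\Lambda_\eta$ admits a minimum, which I would define as $\lambda_\eta^\star$. For any $t$ and $u$ with $|u-t|\geq \lambda_\eta^\star$, setting $\lambda = t-u$ (without loss of generality $u\leq t$ in the online context) gives $\lambda \in \Lambda_\eta$, hence
\[
\PP\!\left[\mbox{CASg}_{u,t}\right] \;=\; f_\tau(t-u) \;\leq\; \eta/2,
\]
which is the claimed bound. The independence of $f_\tau$ from $t$ is what makes the single threshold $\lambda_\eta^\star$ valid uniformly in $t$.

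The second item is completely parallel, substituting $f_d(\ell_t)$ for $f_\tau(t-u)$. I would define $\mathcal L_\eta = \{\ell \geq 0 : f_d(\ell) \leq \eta/2\}$, use convergence of $f_d$ to $0$ to get non-emptiness, use monotonicity to get upper-closedness, and let $\ell_\eta^\star = \min \mathcal L_\eta$. The conditioning on $\overline{\mbox{CASg}_{u,t}}$ and on $\ell_t=\ell$ plays no role in the argument beyond being absorbed into the definition of $f_d$; once the conditional probability is written as a function of $\ell$ alone, the bound follows mechanically.

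There is essentially no mathematical obstacle here; the proposition is a direct unpacking of the convergence hypothesis, and the assumptions (monotone decay to $0$ and no dependence on $t$) are doing all the work. The only delicate point worth spelling out is why the minimum (rather than merely an infimum) exists, for which I would invoke either the integer-valued nature of the time index or, if $\lambda$ is treated as continuous, the right-continuity implied by monotonicity together with the fact that $f_\tau$ eventually lies below $\eta/2$; this justifies writing $\lambda_\eta^\star$ and $\ell_\eta^\star$ as the \emph{smallest} values rather than as infima that might not be attained.
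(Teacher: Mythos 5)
Your proof is correct and follows essentially the same route as the paper's: both arguments simply unpack the definition of convergence of $f_\tau$ (resp.\ $f_d$) to $0$ and use the assumed independence from $t$ to make the threshold uniform, then translate $\lambda = t-u$ back into the stated bound. Your additional remark on why the \emph{smallest} such $\lambda_\eta$ exists (integrality of the time index, or monotonicity plus eventual smallness) is a minor refinement the paper leaves implicit, but it does not change the substance of the argument.
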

The conclusions of Proposition~\ref{prop:stationarity} follow directly from the definition of convergence to 0.
Before considering the consequences of this proposition in Theorem~\ref{thm:change-vs-oracle-prob}, the validity of the assumptions is discussed.
The function $f_\tau: \lambda \mapsto \PP\left[\W_{t-\lambda,t} \right]$ gives the probability that the segment assigned to $X_{t-\lambda}$ changes as a function of the distance $\lambda$ from the last observation. It is assumed to be decreasing because the probability of missing a breakpoint decreases with the number of points. 
The function $f_d: \ell \mapsto \PP\left[\V_{u,t}|\overline{\W_{\overline u,t}},\ell_t=\ell \right]$ gives the probability of changing the status of a point conditional on the assigned segment remaining unchanged, as a function of the length $\ell$ of the segment. It is assumed to decrease with the number of points inside the segment. 
%
%It is assumed that $\PP\left[\W_{t-\lambda,t} \right]$ and $\PP\left[\V_{u,t}|\overline{\W_{\overline u,t}},\ell_t=\ell \right]$ do not depend on $t$.
%Since the probability of detecting a breakpoint depends on the position of the actual breakpoint, this assumption can only be verified by assuming that the position of the breakpoints is determined by a stationary process.
%
%Furthermore, the probability of detecting a breakpoint depends on the  of the shift that occurs in the time series. Therefore, another necessary condition is to assume that at each breakpoint the segment law changes according to transition rules that are the same throughout the series.
%
Assuming that the probabilities $\PP\left[\W_{t-\lambda,t} \right]$ and $\PP\left[\V_{u,t}|\overline{\W_{\overline u,t}},\ell_t=\ell \right]$ do not depend on $t$, it is possible to use the same model for the entire series. Thus, there is no need to recalculate these probabilities for each observation time. 

%From this result, the definition of the active set as applied at the start of Algorithm~\ref{alg:generalise-sbad} can be deduced. Let $\hat \lambda_\eta$ and $\hat\ell_\eta$ be estimators of $\lambda_\eta$ and $\ell_\eta$. The active set contains the data points whose status needs to be reassessed because the uncertainty associated with estimating the position of breakpoints or the value of scores is too large.

According Theorem~\ref{thm:change-vs-oracle-prob}, by re-evaluating only the points in the active set, defined by Algorithm~\ref{alg:active-set} using $\lambda_\eta$ and $\ell_\eta$, the final status of a data point will be the same as the oracle status with a high probability.
%Also, ideally, the active set should be as small as possible so that the status of a data point converges quickly to its final status. The following Theorem~\ref{thm:change-vs-oracle-prob} addresses this issue.

%Theorem~\ref{thm:change-vs-oracle-prob} shows that under the conditions of Proposition~\ref{prop:stationarity}, the ideal version of BKAD, applied with $\lambda_\eta$ and $\ell_\eta$ parameters, controls the proportion of differences between the final status $d_{u,T}$ and the oracle status $\tilde d_{u}$.
\begin{theorem}\label{thm:change-vs-oracle-prob}
	Let $\eta$ be the confidence threshold, $\lambda_\eta$ and $\ell_\eta$ are defined as in Proposition~\ref{prop:stationarity}.
	Let $m$ be the integer defined by $m=max(\lambda_\eta, \ell_\eta)$.
	
	It is assumed that:
	\begin{itemize}
		\item the probability to move the latest breakpoint beyond $m$ is lower than $\eta$.
		\begin{align}\label{eq:move-b}
			\PP\left(\exists t' \hat b_{t'}<\hat b_{t} \mbox{ and } |\hat b_{t'}-t'|>m \bigg| |b_t-t|<m\right) \leq \eta
		\end{align}
	
	    \item The probability of changing the segment assignment depends only on $\lambda_{u,t}=t-u$, the distance between $X_u$ and the end of the time series $t$. 
	    This assumption can be used to calculate the probability of changing the segment assignment within the previous segments (segments that are not the current segment):
	    \begin{align}
	    	&\PP\left(\exists t'>t, \hat \tau(t')\cap \rrbracket \hat\tau_i(t),u\rrbracket \neq \emptyset\bigg| t-\hat \tau(t)=\lambda\right) \notag\\
	    	&\qquad= \PP\left(\exists t'>t, \hat \tau(t')\cap \rrbracket \hat b_t,u\rrbracket \neq \emptyset\bigg| t-\hat b_t=\lambda\right)\label{eq:all-current}
	    \end{align}
	\end{itemize}
	
	Then, applying the ideal BKAD as stated in Definition~\ref{def:ideal-bkad} with the parameters $\lambda'=\lambda_\eta$ and $\ell'=\ell_\eta$, for each $u$, the probability that the final status is different than the oracle status is lower than $\eta$:
	\begin{align}
		\PP(d_{u,T} \neq \tilde d_u)\leq \eta
	\end{align}

	Furthermore, introducing the following notation: 
	For all $t$, for all $u$, let $\hat\tau^{:u}(t) = \lbrace \hat \tau_{i}(t),  \hat\tau_{i}(t)<u\rbrace$ and $\hat\tau^{u+q_1:}(t) = \lbrace \hat \tau_{i}(t),  \hat\tau_{i}(t)>u+q_1\rbrace$. 
	
	Assuming that there is a number $q_1$ such that 
	\begin{align}
		\forall u,t, \quad \hat\tau(t)^{:u} \perp \hat\tau^{u+q_1:}(t) \label{eq:breakpoint-indp}
	\end{align}

	Then, the proportion of status that are different between the final status and the oracle status is lower than $\eta$:
	\begin{align}
	    \lim_{t \rightarrow T} \frac{1}{t} \sum_{u=1}^t \mathbb{1}[d_{u,t} \neq \tilde d_u] \leq \eta
	\end{align}
\end{theorem}
The proof of Theorem~\ref{thm:change-vs-oracle-prob} can be found in Appendix~\ref{appendix:proof-oracle-vs-final}.
The results of Theorem~\ref{thm:change-vs-oracle-prob} support the way the active set is built at the beginning of Algorithm~\ref{alg:generalise-sbad}.
Note that the first result applies locally to one data point. The second result applies to the entire time series.
However, the true values of $\lambda_\eta$ and $\ell_\eta$ are not known, so they need to be estimated. This problem is addressed in Appendix~\ref{sec:delay-proba} and~\ref{sec:stable-seg-change}. Furthermore, this result is for an ideal version of BKAD, which limits its scope in practice. The following section discusses the validity of the various hypotheses.

\subsection{Discussion about theoretical hypotheses}\label{sec:hypo-disc}
The previous theoretical results prove that under ideal conditions BKAD allows to detect anomalies with a control on the FDR. Also the strategy consisting in updating only the active set ensures that the final status are the same as knowing the complete time series, with a low proportion of errors.
These ideal conditions cannot be verified in practice. The approach in this paper is as follows: each component of the BKAD detector is studied to find the best parameters. Then, the detector is empirically tested to see under which conditions it succeeds in detecting anomalies with a control on the FDR.
Now, the different assumptions are examined, their validity is discussed, and the properties that the components must verify are deduced.

First, the assumption \ref{assume.segmentation} cannot be verified. It is impossible to ensure that a breakpoint detector estimates the location of all breakpoints with perfect accuracy, even with a delay of $\lambda>0$. To get closer to these working hypotheses, a powerful breakpoint detector is needed. BKAD uses KCP \cite{arlotKernelMultipleChangepoint} because it has several interesting properties: the number of breakpoints is estimated by model selection, it can detect changes in any feature thanks to kernels and it does not require parametric assumptions that are difficult to verify. For more details, see the dedicated Appendix~\ref{sec:breakpoint}.

The first part of \ref{assume.score}, which assumes that there is an iid oracle score, is always verified.  Indeed, for each segment $i$ in $\llbracket 1, D\rrbracket$, the reference distribution is noted $\mathcal P_{0,i}$ and for each normal data point of the indices $u$ in this segment: $\tilde a(X_u, i) = \PP_{X \sim \mathcal{P}_{0,i}}(X\leq X_u)$ follows a uniform distribution $U(0,1)$. However, there is not always uniqueness of such an oracle score.
For example, if the changes occur only in the mean, then by noting $\mu_i$ the mean of the $i$th segment, $|X_u - \mu_i|$ is also an oracle atypicity score that verifies the iid property. 
However, in practice, the oracle score has to be estimated correctly, which can be difficult depending on the oracle score. 
It is not possible to verify the property \ref{assume.score}. Indeed, it is not possible to know the exact value of the oracle atypicity score. To approximate this property, one needs a measure of nonconformity $\overline a$ that converges as quickly as possible to the value of the oracle atypicity score. As described in Appendix~\ref{sec:score}, the non conformity measure must be robust and efficient.
As a consequence of the fact that the estimated atypicity scores are not iid, the scores from different segments cannot be rigorously used to construct the calibration set. To limit this issue, the calibration set is built from the segments with the most similar distribution to the one of the current segment. This mechanism is described in Appendix~\ref{sec:calibration}.

Another assumption of the ideal BKAD in Definition~\ref{def:ideal-bkad} is that the calibration set is built using only normal data points, which requires knowledge of the true labels. In the Algorithm~\ref{alg:generalise-sbad}, this is obtained by using the labels previously estimated by the anomaly detector. This exposes the calibration set to contamination from undetected anomalies. It can also bias the calibration set by incorrectly removing false positives. This may limit the ability of BKAD to control the FDR.
From a theoretical point of view, this leads to a dependency between the calculation of the $p$-value at time $t$ and the state of the data points at the previous time $t-1$, which complicates the analysis.

Finally, in the previous section, the values of $\lambda_\eta^*$ and $\ell_\eta^*$ were obtained from the functions $f_\tau$ and $f_d$. These quantities are needed to reduce the uncertainty of the BKAD estimates. In Appendix~\ref{sec:delay-proba} and~\ref{sec:stable-seg-change}, estimators of $f_\tau$ and $f_d$ are introduced.

The different components of the detector are discussed and described in more detail in Appendix~\ref{sec:supplement-descriptions}. The next section~\ref{sec:emp-reslt} empirically evaluates the performance of the detector.

\section{Empirical Assessment of FDR control}
\label{sec:emp-reslt}
An anomaly detector based on breakpoint detection has been proposed in Section~\ref{sec:method-description}. The core components are described separately in the Appendix. The theoretical results are introduced in Section~\ref{sec:th-results}. In this section, the empirical performance of the whole anomaly detector is assessed.
The experiments are conducted in two steps.
First, the anomaly detector is applied to several synthetic time series.
The flexibility of the detector is evaluated and the roles played by the kernel and the atypicity score are highlighted. 
Then, the anomaly detector is applied by replacing some estimators with true knowledge in order to explore more deeply the reasons for the errors made by the anomaly detector. More details about the experiments can be found in the Supplementary Materials.
Furthermore, in the next section, the anomaly detector is evaluated against alternative anomaly detectors.

An experimental framework is designed to conduct the experiments and to evaluate different aspects of the anomaly detector. The framework described in Section~\ref{sec:exp-frame} is adapted for different time series and anomaly detector parameters. 
\subsection{Experimental framework}\label{sec:exp-frame}
Let's consider a time series generation process and an anomaly detector. The following steps are repeated on different samples of the time series:
\begin{enumerate}
	\item Generate the time series, according to the the first reference distribution $\mathcal P_{0,1}$, the proportion of anomalies $\pi$, the alternative distribution $\mathcal P_{1,1}$ and the transition rule describing how the parameters of the reference distribution will change between two segments.
	\begin{enumerate}
		\item The number $D$ of breakpoints is generated by $Exp(T/\theta)$, where $\theta$ is the average distance between two breakpoints.
		\item The position of the $D$ breakpoints follows $U([1,T])$. In addition to the previous step, this implies that the process of breakpoint positions is a Poisson process.
		\item The rule is applied iteratively to get the reference and alternative distributions for each segment.
		Two types of rules are considered:
		\begin{itemize}
			\item Breakpoint in the mean with a jump size of $\Delta$. For each $i$ in $\llbracket 1, D-1\rrbracket$, let $\mu_i$ be the mean of the reference distribution in the $i$th segment. The mean of a segment is equal to the mean of the previous one shifted with $\Delta$.
			\begin{align}
			\forall i \in \llbracket 1, D-1 \rrbracket,\quad	\mu_{i+1} = \mu_i + \zeta_i\Delta
			\end{align} 
		 With $\zeta_i$, a random variable following the Rademacher distribution and defining the sign of the jump.
			\item Breakpoint in the variance with a jump scale size of $\Delta$. For each $i$ in $\llbracket 1, D-1\rrbracket$, let $\sigma_i$ be the standard deviation of the reference distribution in the $i$th segment. The standard deviation of a segment is equal to the standard deviation of the previous segment multiplied or divided by $\Delta$.
			\begin{align}\label{eq.rule.std}
				\forall i \in \llbracket 1, D-1 \rrbracket,\quad	\sigma_{i+1} = \exp(\zeta_i\ln\Delta/2)*\sigma_i
			\end{align}
			 With $\zeta_i$, a random variable following the Rademacher distribution and defining if the standard deviation is multiplied or divided by $\Delta$.
		\end{itemize}
		\item The position of anomalies are generated by a Bernoulli distribution:
		$A_t \sim Ber(\pi) $
		\item All the values of the time series are computed as follows:
		$$\forall i \in \llbracket 1, D\rrbracket,\quad \forall t \in \llbracket \tau_i, \tau_{i+1}\llbracket,\quad  \begin{cases}
			X_t \sim \mathcal P_{0,i},& \text{if } A_t=0\\
			X_t \sim \mathcal P_{1,i},              & \text{otherwise}
		\end{cases}$$

	\end{enumerate}
	\item Apply the anomaly detector on the generated time series. Three core components need to be defined: 
	\begin{enumerate}
		\item the appropriate kernel to identify the breakpoints using KCP,
		\item the scoring function $a$
		\item and parameters $n$ for the length of the calibration set and $\lambda$ and $\ell$ to define the active set.
	\end{enumerate}
	\item Compare the detections with true anomalies and calculate the proportion of false discoveries and of false negatives.
\end{enumerate}
The two criteria FDR and FNR are estimated as the average of the FDP and of the FNP over all repetitions. 

The experimental framework is used in different experiments:
In Section~\ref{sec:application-synth}, different synthetic time series are tested and analyzed. In Section~\ref{sec:diagnosis-underperf} the causes of underperformances of the anomaly detector are studied. Furthermore, in Section~\ref{sec:evaluation-competitors}, the proposed anomaly
detector is compared to alternative anomaly detectors using various public data collections. Finally, in the Supplementary Material in Section 7, the effect of hyperparameter choice on performance is evaluated.

\subsection{Evaluation on different scenarios}
\label{sec:application-synth}
The goal of this section is to check if the breakpoint based anomaly detector is able to detect anomalies with a controlled FDR considering different scenarios of time series. 
In these different scenarios, the reference distribution, the alternative distribution that generates the anomalies, and the shifts between two segments vary. The parameters of the anomaly detectors must be adapted to each scenario. Special attention should be paid to the choice of the non-conformity measure and the kernel.

The first scenario considers Gaussian time series with breakpoints in the mean and anomalies in the tail of the distribution. This simplest scenario is used as a reference before evaluating a more complex one. In the second scenario, a Student's distribution is considered. The goal is to evaluate BKAD on a heavy-tailed reference distribution. The third scenario considers a Gaussian mixture time series with breakpoints in the mean and anomalies in the center of the distribution between the two Gaussian modes.
In this case, the detector is checked for anomalies that are not in the tail of the distribution. 
The 4th scenario evaluates the detector on heteroscedastic time series, considering Gaussian time series with breakpoints in the mean and variance simultaneously.
In the 5th scenario, Gaussian data with breakpoints in the variance are used to evaluate how the anomaly detector can be applied with changes in the variance, which is a more difficult case study.
The last scenario uses 2D Gaussian time series with breakpoints in the covariance to evaluate the detector on multidimensional data. Indeed, the breakpoint in the covariance ensures that breakpoints and anomalies cannot be detected by applying the anomaly detector to each dimension. 

Anomalies at the tail of the distribution are detected using the z-score. 
In the third scenario, anomalies occur close to the mean of the distribution, and therefore cannot be detected by $z$-scores. Therefore, the kNN score introduced in \cite{ishimtsev2017conformal}, is used. Indeed, in this case, anomalies can be characterized by their distance from other segment data.
For 2D data, the Mahalannobis distance is applied to reflect the distribution of the data. 

For the breakpoint detector, the Gaussian kernel  with bandwidth estimated using the median heuristic is considered, as presented in Appendix~\ref{sec:breakpoint}.  This kernel enables accurate breakpoint detection in scenarios where the variance remains constant.
 In the case of heteroscedasticity in time series, where the variance changes between two segments, the median heuristic may be more limited.
 Indeed, time series will have parts where the variance is very low and parts where it is very high.  The problem is that a kernel may be good at detecting breakpoints in a low variance context, but have difficulty when the variance is high, and vice versa. It was found to be more advantageous to use a kernel written as the sum of two kernels, one for detecting breakpoints when the variance is low and another for detecting breakpoints when the variance is high.
 
 According to preliminary experiments in Sections~\ref{sec:delay-proba} and~\ref{sec:stable-seg-change}, the active set is built using $\hat\lambda=100$ and $\hat\ell=100$. Based on the rules defined in Appendix~\ref{sec:threshold}, Benjamini-Hochberg is applied to the active set with the modified parameter $\alpha'=\frac{\alpha}{1+\frac{1-\alpha}{m\pi}}$. The calibration set is built according to the rules of Appendix~\ref{sec:threshold}, where the value $n$ is chosen equal to $m/\alpha'-1$. Two cases are considered, $\alpha=0.2$ and $\alpha=0.1$. In the case $\alpha=0.2$, then the following values are chosen $\alpha'=0.1$ and $n=999$. In the case $\alpha=0.1$, then $\alpha'=0.05$ and $n=1999$.

\begin{table}[!b]
	\centering
	\begin{tabular}{l l l l}
		\toprule
		Scenario & $\alpha$ & FDR (std) & FNR (std) \\
		\midrule
		Gaussian data + shift in mean & 0.1 & 0.134 (0.117) & 0.123 (0.213) \\
		Gaussian data + shift in mean & 0.2 & 0.242 (0.121) & 0.039 (0.124) \\
		Student data + shift in mean & 0.1 & 0.158 (0.097) & 0.059 (0.196) \\
		Student data + shift in mean & 0.2 & 0.289 (0.129) & 0.026 (0.123) \\
		MoG data + shift in mean & 0.1 & 0.118 (0.102) & 0.246 (0.303) \\
		MoG data + shift in mean & 0.2 & 0.202 (0.152) & 0.137 (0.220) \\
		Gaussian data + shift in mean and var & 0.1 & 0.134 (0.112) & 0.057 (0.125) \\
		Gaussian data + shift in mean and var & 0.2 & 0.253 (0.125) & 0.018 (0.061) \\
		Gaussian data + shift in var & 0.1 & 0.229 (0.173) & 0.298 (0.314) \\
		Gaussian data + shift in var & 0.2 & 0.282 (0.144) & 0.225 (0.260) \\
		Gaussian data + shift in cov & 0.1 & 0.126 (0.092) & 0.054 (0.145) \\
		\bottomrule
	\end{tabular}

\caption{FDR and FNR of BKAD according to the time series generation scenario.}
\label{tab:synthetical-data-results}
\end{table}

The results of the experiment are summarized in Table~\ref{tab:synthetical-data-results}. Each row corresponds to a scenario for which the results of 50 time series have been averaged to estimate the FDR and FNR, with the standard deviation in parentheses. 
In most of the cases, the FNR is close to 0. This is necessary to ensure the FDR control with the modified BH procedure according to Corollary~\ref{corollary-fdr-bkad}. For example, in the Gaussian scenario with $\alpha=0.2$, the FNR is equal to $0.039$, and the FDR remains close to the desired $\alpha$ level. 
However, it is always slightly higher than alpha. In fact, for the Gaussian scenario with mean shift, it is equal to $0.23$ instead of $\alpha=0.20$, as shown in Table~\ref{tab:synthetical-data-results}. The FDR is more or less controlled in most scenarios when the distribution is varied: Student, Gaussian mixture, or the atypicity score: $z$-score, $k$NN.
For shift in variance, the anomaly detection shows a poor accuracy, since there is a lack of control of the FDR and the FNR is very high. This shows that it is more difficult to detect a breakpoint in the variance than in the mean. The variance of the FDP and FNP is high, of the order of 0.1, which is due to the fact that the result of Theorem~\ref{thm:ideal-detector} is asymptotic and does not guarantee control over series of finite length (here 3000 points).

\subsection{Diagnose the causes of underperformance}
\label{sec:diagnosis-underperf}
Our breakpoint based anomaly detector has been tested on different time series data in Section~\ref{sec:application-synth}, it shows good performances to ensure low FNR with an almost controlled FDR in different cases. However, the FDR is never completely under control, and is always slightly higher than expected. This section examines why this lack of complete control of the FDR occurs by replacing some estimators with knowledge of the true values and evaluate the effect on the FDR. 
The BKAD is applied to the synthetic time series, where some estimators are replaced by true knowledge, called oracle version. Three estimators are chosen to be replaced by their oracle versions:
\begin{itemize}
	\item The breakpoint estimator: can be replaced by the true breakpoint position,
	\item The mean and standard deviation estimators: can be replaced by their true values,
	\item The anomaly removed: As described in Appendix~\ref{sec:calibration}  in the main article when building the calibration set, estimated anomalies are removed to avoid biasing the estimation of the $p$-values. The oracle version of this is to remove the true anomalies. 
\end{itemize} 

 The performances of the different detectors are evaluated on a different laws generating the time series (Student, Gaussian, Mixture of Gaussians noted MoG). 
 Table~\ref{tab:synthetic-oracle-gaussian} gives a synthetic view of the results for Gaussian data.
 Each row represents a different detector. For each component, if the quantity is estimated, it is marked with ``E'', if the true value is used, it is marked with ``O''. Statistically significant differences according to a permutation test \cite{fisher1960design} are marked in bold.
 The complete empirical results can be found in Section 6 of the Supplementary Material. 
 
\begin{table}[!t]
	\centering
	\begin{tabular}{llm{2cm}m{2cm}rr}
		\toprule
		  $\alpha$ & Breakpoint & Mean-and-variance & Anomaly removing &   FDR &   FNR \\
		\midrule
		 0.1& E& E& E     & \textbf{0.134} & 0.123 \\
			& E& E& O     & 0.104          & 0.105 \\
			& O& E& O     & 0.100          & 0.091\\
			& O& O& E     & \textbf{0.165} & 0.041 \\
			& O& O& O     & 0.121          & 0.048 \\\hline
		 0.2& E& E& E     &\textbf{ 0.242} & 0.039 \\
		    & E& E& O     & 0.182          & 0.054 \\
		    & O& E& O     & 0.198          & 0.032 \\
	        & O& O& E     & \textbf{0.301} & 0.018 \\
		    & O& O& O     & 0.197          & 0.018 \\
		\bottomrule
	\end{tabular}
	\caption{Anomaly detector performances with (``O'') and without (``E'') knowledge of true breakpoints, true mean and variance or true anomalies for removing anomalies on Gaussian data.}
	\label{tab:synthetic-oracle-gaussian}
\end{table}
 
 According Table~\ref{tab:synthetic-oracle-gaussian}, 
 it is clear that the control of the FDR is worse when the calibration set is built based on detected anomalies. Indeed, the false positives and false negatives detected at time $t$  will badly affect the detection at time $t + 1$. Despite the fact that a robust score is chosen, these observations lead to a conclusion that the $p$-value estimator is sensitive to:
\begin{itemize}
	 \item False negatives: If there is a missed anomaly in the calibration set, the $p$-values of all data points in the active set will be underestimated. This situation leads to generate more false negatives, which will confound the calibration sets of subsequent instants.
	\item False positives: The $p$-value estimator is also sensitive to false positives due to the way the calibration set is constructed. As a reminder, detected anomalies are replaced by a random points belonging to a segment similar to the current segment. The problem arises when an anomaly is falsely detected.
	Generally speaking a false positive is a point with a high score.  When a false positive is replaced with a random point, its score will be statistically lower.
	Thus, removing the false positives from the calibration set reduces the average score in the calibration set and consequently reduces the $p$-values of the data points in the active set. This leads to more false positives, which will affect the construction of calibration sets at later times.
\end{itemize}
On the other hand, estimating the position of breakpoints or the mean and variance have little impact on the detector's performance.

The conclusion of this analysis is that most of the underperformance relative to the ideal case, such as higher than expected FDR, is explained by the non-robustness of the empirical $p$-value estimator and the contamination of the calibration set by false negatives and false positives.

\section{Evaluation against competitors}
\label{sec:evaluation-competitors}
After studying the conditions that must be met to ensure high detection performance and control of the FDR in the previous sections, the breakpoint detection based anomaly detector (BKAD) proposed in this paper is compared to alternative anomaly detectors from the literature on different data collections. The goal is to determine if and under which conditions the new anomaly detector can improve the state of the art.

\subsection{Methods}
BKAD is evaluated against state-of-the-art anomaly detectors presented in the review \cite{SchmidlAnomalydetectiontime2022}. The most representative   unsupervised anomaly detectors
for univariate time series data are selected. The implementation of \cite{SchmidlAnomalydetectiontime2022} is used, with default hyperparameters.
The detectors selected are those that theoretically capable of detecting anomalies in piecewise iid data. These algorithms fall into two categories: the one that build a context such as a segment, a sliding window or a cluster, and on the other that use subseries instead of single points. On the other hand, predictive or regression models are of little interest on piecewise iid data.

%\textbf{Context based detectors:}
\paragraph{Median \cite{basu2007automatic}}
A sliding windows is used to estimate the median and dispersion parameter of last observations. The atypicity score used is the $z$-score. The main difference with the BKAD approach is the use of sliding windows instead of using a breakpoint detector to define the segments.

\paragraph{CBLOF \cite{he2003discovering}}
Cluster based local outlier factor identifies the cluster to which individual points belong, then it computes the local outlier factor associated with that cluster. The use of clusters is similar to that of breakpoints in that it attempts to group similar points together, but has no temporal notion.

\paragraph{Sub. LOF \cite{breunig2000lof}}
The method divides the time series in subsequences and uses Local Outlier Factor on the subsequences set.

\paragraph{LOF \cite{breunig2000lof}}
The method applies Local Outlier Factor to punctual data. Interesting to compare with ``Sub. LOF'' and ``CBLOF''.

%\textbf{Subseries anomaly detection:}
\paragraph{Sub. IF \cite{liu2008isolation}}
The method divides the time series in subsequences and uses Isolation Forest on the subsequences set.

\paragraph{DWT \cite{thill2017time}}
Method based on wavelet to remove noise. Atypicity score is computed using the Gaussian distribution on the Discrete Wavelet Transform, with Haar wavelet. Anomalies can be detected as abnormal Haar coefficients.

\paragraph{FFT \cite{rasheed2009fourier}}
Method based on Fast Fourier Transform. It uses Local outlier factor on the Fast Fourier Transform of the subsequences. Anomalies can be detected as abnormal frequency coefficients.  

\subsection{Threshold}\label{sec:th-bench}
After applying these different methods, an atypicity score is obtained. This score is sufficient to compute the AUC metric, but does not allow detection and calculation of the FDR and FNR without thresholds. To calculate these thresholds, the method introduced in \cite{kronert2023fdr} is used, which guarantees FDR control at a fixed $\alpha$ level in case the time series of scores is iid. The threshold of BKAD is chosen as described in Appendix~\ref{sec:threshold}. Here $\alpha$ is set to $0.2$ for all detectors and time series.

\subsection{Data}
To ensure a comprehensive analysis, different kind of time series data are considered:
\begin{itemize}
	\item Time series with breakpoints
	\item Time series with seasonality
	\item Residual from time-series with seasonality
	\item Real data time series
\end{itemize}

\paragraph{Time series with breakpoints}
The time series with breakpoints are generated according to the experimental design presented in Section~\ref{sec:exp-frame} with the following hyperparameters:
the reference distribution is Gaussian $\mathcal P_{0,1}=\mathcal{N}(0,1)$ and all anomalies follow the law of $4\zeta$, where zeta follows the Rademacher distribution. Anomalies are generated with a proportion of $\pi=0.01$. The breakpoint positions are generated according to the Poisson process with an average segment length of 125. To avoid having too few segments, breakpoints are removed if a segment has less than 100 points. For the benchmark \emph{breakpoint-mean}, breakpoints occur in the mean with a $\Delta=2$. And, for the benchmark \emph{breakpoint-var}, breakpoints occur in the variance with $\Delta=1.5$.

\paragraph{Time series with seasonality}
To study how the anomaly detector behaves on time series not following the statistical model introduced in Section~\ref{sec:modelisation-problem}, time series with seasonality and trend are considered.

Let the following components be given:
\begin{enumerate}
	\item $R_t  \sim \mathcal N(0, \sigma)$, the residual, $\sigma=1$
	\item $A_t \sim B(\pi)$ the abnormality variable, $\pi=0.01$
	\item $S_{1,t} = A_1\sin(2\pi f_1 t)$ the seasonality with long period, where the amplitude $A_1$ and the frequency $f_1$ are random variables, $A_1\in \lbrace 1, 3, 5 \rbrace$ and $f_1 \in \lbrace 5, 10, 20\rbrace$
	\item $S_{2,t} = a_{21}A_1\sin(2\pi w_{21}f_1 t)$ the seasonality with short period, where the frequency multiple $w_{21}$ and the amplitude attenuation are random variable, $a_{21}\in \lbrace 0.5, 0.3, 0.1 \rbrace$ and $w_{21} \in \lbrace 2, 3, 5 \rbrace$
	\item $\sigma_t = \sin(t)+1.5$ the seasonal variance
	\item $T_t=Bt$ the linear trend
\end{enumerate}

The following collections are generated:
\begin{enumerate}
	\item \textit{simple-seasonality}: $X_t = S_{1,t} + (1-A_t)R_t + A_t\zeta_t\Delta'$
	\item  \textit{complex-seasonality}: $X_t = S_{1,t} + S_{2,t} + (1-A_t)R_t + A_t\zeta_t\Delta'$
	\item \textit{variance-seasonality}: $ X_t = ((1-A_t)R_t + A_t\zeta_t\Delta')\sigma_t $
	\item \textit{trend-seasonality}: $X_t = T_t + S_t + (1-A_t)R_t + A_t\zeta_t\Delta'$ 
\end{enumerate}

\paragraph{Residual from time series with seasonality}
In practical applications, to simplify the detection of anomalies, seasonality and other predictable patterns are removed during a preprocessing step.
To evaluate how the anomaly detector performances are affected by this preprocessing step, a new benchmark is built from the residuals extracted for each time series in the ``Time series with seasonality'' benchmark.
In this experiment, the residual is extracted by removing the trend and the seasonality using the ``seasonal\_decompose'' function from the Python library \emph{statsmodels}.

\paragraph{Time series from real data:}

The anomaly detectors are evaluated on various time series datasets coming from different sources. The Numenta Anomaly Benchmark (NAB) from \cite{lavin2015evaluating}, the dodger dataset from the UCI at \cite{ihler2006adaptive} and Mars Science Laboratory (MSL) and Soil Moisture Active Passive (SMAP) provided by NASA in \cite{hundmanDetectingSpacecraftAnomalies2018} are used to build the complete benchmark.

\subsection{Metrics}
To measure the performances of different anomaly detectors, two metrics are reported: The Area Under Curve (AUC)\cite{bradley1997use, hanley1982meaning} in Table~\ref{tab:bench-auc} and the FDR/FNR in Table~\ref{tab:bench-fdpfnp}. 
The advantage of the Area Under the Curve (AUC) is to be able to evaluate the anomaly detector without evaluating the threshold selection method. However, this can also be a limitation for real use, since a threshold is needed for practical applications. 
To determine the ability of anomaly detectors to control the false positive rate to a desired level while keeping the false negative rate low, the FDR and FNR metrics are reported.
The disadvantage of these metrics is that it can be difficult to compare two detectors if one performs better on FDR and the other on FNR. Furthermore, they only take into account values for a single threshold, which have to be precised for detectors that return only an atypicity score. The threshold policy used for this experiment is the one implemented in \cite{kronert2023fdr}, as stated in Section~\ref{sec:th-bench}.

\subsection{Results and analysis}
The results are summarized in two tables. Table~\ref{tab:bench-auc} represents the AUC metric according to benchmarks and anomaly detectors and  Table~\ref{tab:bench-fdpfnp} represents the FDR and FNR metrics.

\begin{table}[tb]
	\centering
	\begin{adjustbox}{max width=\textwidth}
		\begin{tabular}{lccccccccc}
			\toprule
			{Benchmark} & {BKAD (Ours)} & {Median} & {CBLOF} & {Sub. LOF} & {LOF} & {Sub. IF} & {DWT} & {FFT} \\
			\midrule
			{Breakpoint in mean} & \textbf{1.00} & 0.95 & 0.80 & 0.65 & 0.70 & 0.64 & 0.61 & 0.83 \\
			{Breakpoint in variance} & \textbf{0.98} & 0.89 & 0.78 & 0.60 & 0.56 & 0.52 & 0.54 & 0.16 \\
			\midrule
			{Simple seasonality} & 0.88 & \textbf{0.98} & 0.73 & 0.71 & 0.68 & 0.56 & 0.57 & 0.72 \\
			{Complex seasonality} & 0.94 & \textbf{0.98} & 0.85 & 0.72 & 0.79 & 0.57 & 0.55 & 0.64 \\
			{Seasonality in variance} & \textbf{1.00} & 0.99 & 0.92 & 0.56 & 0.87 & 0.54 & 0.57 & 0.71 \\
			{Seasonality and trend} & 0.88 & \textbf{0.98} & 0.67 & 0.71 & 0.63 & 0.53 & 0.57 & 0.72 \\
			\midrule
			{Res. simple seasonality} & \textbf{0.99} & 0.98 & 0.99 & 0.69 & 0.92 & 0.62 & 0.57 & 0.89 \\
			{Res. complex seasonality} & \textbf{1.00} & 0.99 & 1.00 & 0.71 & 0.94 & 0.63 & 0.56 & 0.91 \\
			{Res. seasonality and trend} & \textbf{0.99} & 0.98 & 0.99 & 0.69 & 0.93 & 0.61 & 0.56 & 0.89 \\
			\midrule
			{DODGER} & 0.56 & 0.30 & 0.48 & 0.54 & 0.51 & \textbf{0.67} & 0.65 & 0.30 \\
			{NAB} & 0.57 & 0.45 & 0.54 & 0.67 & 0.48 & 0.66 & \textbf{0.73} & 0.20 \\
			{NASA-MSL} & 0.57 & 0.56 & 0.68 & 0.61 & 0.56 & \textbf{0.84} & 0.81 & 0.56 \\
			{NASA-SMAP} & 0.60 & 0.39 & 0.61 & 0.69 & 0.51 & 0.83 & \textbf{0.90} & 0.47 \\
			\bottomrule
		\end{tabular}
	\end{adjustbox}
	\captionsetup{justification=centering, labelfont=bf, font=small}
	\caption{AUC metric according to the anomaly detectors on benchmarks.}\label{tab:bench-auc}
\end{table}

\begin{table}[tb]
	\begin{adjustbox}{max width=\textwidth}
		\begin{tabular}{lrrrrrrrrrrrrrrrr}
			\toprule
			& \multicolumn{2}{l}{BKAD(Ours)} & \multicolumn{2}{l}{Median} & \multicolumn{2}{l}{CBLOF} & \multicolumn{2}{l}{Sub. LOF} & \multicolumn{2}{l}{LOF} & \multicolumn{2}{l}{Sub. IF} & \multicolumn{2}{l}{DWT} & \multicolumn{2}{l}{FFT} \\
			Benchmark &   FDR &   FNR &           FDR &   FNR &   FDR &   FNR &   FDR &   FNR &   FDR &   FNR &   FDR &   FNR &   FDR &   FNR &   FDR &   FNR \\
			\midrule
			Breakpoint in mean                  & \textbf{0.25} & \textbf{0.04} &         0.07 & 0.48 & 0.83 & 0.52 &           0.99 & 0.81 & 0.52 & 0.70 &          0.99 & 0.44 &     0.99 & 0.01 & 0.93 & 0.46 \\
			Breakpoint in variance              & \textbf{0.36} & \textbf{0.17} &         \textbf{0.19} & \textbf{0.36} & 0.72 & 0.37 &           0.95 & 0.54 & 0.72 & 0.71 &          0.91 & 0.58 &     0.93 & 0.28 & 0.89 & 0.27 \\\hline
			Simple seasonality                  & 0.34 & 0.47 &         \textbf{0.21} & \textbf{0.45} & 0.62 & 0.72 &           0.99 & 0.73 & 0.48 & 0.76 &          0.99 & 0.96 &     0.97 & 0.30 & 0.92 & 0.58 \\
			Complex seasonality                 & 0.27 & 0.44 &         \textbf{0.19} & \textbf{0.44} & 0.41 & 0.64 &           0.99 & 0.68 & 0.37 & 0.64 &          0.99 & 0.95 &     0.95 & 0.43 & 0.95 & 0.65 \\
			Seasonality and trend               & 0.50 & 0.46 &         \textbf{0.22} & \textbf{0.45} & 0.77 & 0.78 &           0.99 & 0.77 & 0.58 & 0.86 &          0.79 & 0.89 &     0.97 & 0.11 & 0.92 & 0.57 \\
			Seasonality in variance             & 0.34 & 0.35 &         \textbf{0.10} & \textbf{0.23} & 0.32 & 0.48 &           0.99 & 0.86 & 0.33 & 0.50 &          0.95 & 0.94 &     0.95 & 0.20 & 0.93 & 0.63 \\\hline
			Res. simple seasonality             & \textbf{0.26} & \textbf{0.42} &         0.19 & 0.56 & \textbf{0.25} & \textbf{0.41} &           0.99 & 0.77 & 0.39 & 0.70 &          0.99 & 0.94 &     0.95 & 0.35 & 0.93 & 0.61 \\
			Res. complex seasonality            & \textbf{0.17} & \textbf{0.15} &         0.12 & 0.31 & \textbf{0.21} & \textbf{0.14} &           0.99 & 0.72 & 0.58 & 0.80 &          0.99 & 0.92 &     0.97 & 0.32 & 0.93 & 0.45 \\
			Res. seasonality and trend          & \textbf{0.26} & \textbf{0.43} &         0.20 & 0.57 & \textbf{0.29} & \textbf{0.41} &           0.99 & 0.79 & 0.44 & 0.73 &          0.97 & 0.94 &     0.95 & 0.36 & 0.92 & 0.64 \\\hline
			dodger                              & 0.41 & 0.66 &         0.79 & 0.99 & 0.97 & 1.00 &           0.71 & 0.92 & 0.89 & 0.09 &          0.39 & 0.91 &     0.70 & 0.55 & 0.89 & 0.09 \\
			NAB                                 & 0.61 & 0.91 &         0.62 & 0.85 & 0.50 & 0.82 &           0.64 & 0.74 & 0.67 & 0.58 &          0.55 & 0.62 &     0.77 & 0.27 & 0.87 & 0.25 \\
			NASA-MSL                            & 0.78 & 0.91 &         0.62 & 0.84 & 0.45 & 0.72 &           0.62 & 0.82 & 0.71 & 0.72 &          0.49 & 0.70 &     0.65 & 0.42 & 0.68 & 0.49 \\
			NASA-SMAP                           & 0.69 & 0.92 &         0.76 & 0.63 & 0.70 & 0.52 &           0.75 & 0.64 & 0.80 & 0.27 &          0.65 & 0.44 &     0.83 & 0.05 & 0.80 & 0.33 \\
			\bottomrule
		\end{tabular}
	\end{adjustbox}
	\caption{FDR and FNR metrics according to the anomaly detectors on benchmarks, ($\alpha=0.2$).}\label{tab:bench-fdpfnp}
\end{table}

    The BKAD detector gets the highest AUC scores on series with breakpoints (``Breakpoint in mean'' and ``Breakpoint in variance''), as shown in Table~\ref{tab:bench-auc}. 
    It can also be seen that this detector remains efficient even when the time series contain seasonality  (``simple seasonality'', ``complex seasonality'',...). This shows the benefits of splitting the time series into simpler segments based on breakpoints, even if it does not follow the model introduced in Section~\ref{sec:modelisation-problem}.
    The results show the importance of preprocessing the data. Indeed, the performance of the detector increases when it is applied to the residuals of the seasonal series instead of the original  seasonal time series, as shown for ``Res. simple seasonality'' or ``Res. complex seasonality''.
    Nevertheless, Table~\ref{tab:bench-fdpfnp} shows that it can be difficult to obtain control of the FDR and FNR even for the best AUC score. This illustrates that FDR control relies heavily on the (piecewise) iid hypothesis.
    Finally, BKAD is not very efficient on tested real data such as (``DODGER'', ``NAB'', ... ) containing anomalies which do not follow  the formalism introduced in Section~\ref{sec:modelisation-problem}. 
    The most efficient methods: ``Sub. IF'' and ``DWT'', define an atypicity score on subseries instead of data points. An interesting approach for the future might be to find a better preprocessing to apply it to real data and improve the anomaly detection.
			
	\section{Conclusion}
	In this paper, an online anomaly detector has been developed that detects anomalies and controls the FDR at a given level $\alpha$ on piecewise stationary time series. The research was conducted to address three challenges:
	\begin{itemize}
		\item Changes in the reference distribution: the changes are detected using a breakpoint detector. Anomalies are retrieved in each homogeneous segment by defining an atypicity score and a calibration set.
		\item Uncertainty: Due to the online nature of the detection, the abnormality status of the data points is uncertain. The notion of an active set is introduced to collect the data points that need to be re-evaluated since there status is too uncertain.
		\item and control of the FDR: modified Benjamini-Hochberg procedure is applied to the active set to control the FDR on the entire time series.
	\end{itemize}
	The result of our research is a modular anomaly detector where all core components have been studied through theoretical or empirical analysis to optimize their performance.
	The detector has been evaluated on a variety of scenarios to understand its strengths and limitations. 
	It demonstrates state-of-the-art capabilities to detect anomalies on time series presenting a distribution shift.
	The main drawback of our method is that it relies on non-robust estimation of $p$-values. Also, the piecewise stationary hypothesis is often not respected in practice.
	Further work concerns the integration of a robust $p$-value estimator and the development of a preprocessing step to apply the anomaly detector to time series that are not piecewise stationary.
	\section{Acknowledgments}
	The authors would like to thank Cristian Preda, director of the MODAL team at Inria, for valuable discussions. 
	The authors would like to thank the anonymous referees, an Associate
	Editor and the Editor for their constructive comments that improved the
	quality of this paper.
	
	\section{Declaration of Competing Interest}
	The authors declare the following financial interests/personal relationships which may be considered as potential competing interests: Worldline Company; Inria. Worldline has a patent pending in connection with this work.
	
	\bibliographystyle{elsarticle-num.bst} 
	\bibliography{new_bibli.bib}

	%\appendix
	
\begin{appendix}

\section{Proofs}

\subsection{Proof of Theorem~\ref{thm:ideal-detector}}
\label{appendix:proof-bkad-fdr}

\begin{proof}[Proof of Theorem~\ref{thm:ideal-detector}]
	Similar to the proof of Theorems 3 and 4 of \cite{kronert2023fdr}, the FDP is written as the ratio of $R_1^t$ and $FP_1^t$.
	\begin{align}
		FDP_1^\infty = \lim_{t\rightarrow \infty}FDP_1^t = \lim_{t\rightarrow \infty}\frac{FP_1^t}{R_1^t} =  \lim_{t\rightarrow \infty} \frac{\sum_{u=1}^{t}(1-A_u)d_{u,t}}{\sum_{u=1}^{t}d_{u,t}}
	\end{align}
	 In the next part of the proof the numerator and denominator are made to converge so that the mFDR expression appears.
	The main steps of the proof are:
	\begin{enumerate}
		\item First, it is shown that for any $u$, as long as $t$ is large enough, then $s_{u,t}=\tilde a(X_u,i_u)$. 
		\item Then, it is deduced that for any $u$, as long as $t$ is large enough, then $p_{u,t}$ are identically distributed. 
		\item Similarly, it is shown that for any $u$, as long as $t$ is large enough, then $d_{u,t}$ are identically distributed.
		\item Finally, since $d_{u,t}$ is identically distributed and respects a mixing property, an ergodicity theorem allows to conclude that $1/t R_1^t$ converges to $\mathbb E[d_{1,\infty}]$ 
	\end{enumerate}

    Some notations are introduced:
    \begin{itemize}
    	\item $Q_s$ the distribution of $\tilde a(X_u,i_u)$.
    	\item $Q_p$ the distribution of the $p$-value $p=\hat p_e(s, \mathcal S_{cal})$, when $s$ and all elements of $\mathcal S_{cal}$ follow $Q_s$ and are independent.
    	\item $Q_d$ the distribution of the status $d=\mathbb 1[p_1 < \varepsilon(p_1,\ldots, p_m)]$, when all $p$-values $p_1, \ldots, p_m$ are computed according to the same calibration set and follow $Q_p$.
    \end{itemize}

    \textbf{Step 1: The scores are iid distributed for $t$ sufficiently large.}
       
    Let $u\in \llbracket 1, \infty\rrbracket$. $u$ belongs to a unique true segment $i$, delimited by the breakpoints $\tau_i$ and $\tau_{i+1}$. 
    
       When $t>u+\lambda^*$, according to (\ref{assume.segmentation}) $\hat \tau(t) \cap \llbracket 1,u\rrbracket = \tau \cap \llbracket 1,u\rrbracket$, then Eq.~\ref{eq:score-ideal-bkad} gives: 
    \begin{align}
    	s_{u,t} = \overline a(X_u, \lbrace X_{\tau_i},\ldots, X_{min(\tau_i+\ell^*,t)}\rbrace)
    \end{align}
    Furthermore, when $t>u+max(\lambda^*,\ell^*)=u+m$, there are more than $\ell^*$ data points in the segment, then (\ref{assume.score}) gives:
    \begin{align}
    	s_{u,t} =  \overline a(X_u, \lbrace X_{\tau_i},\ldots, X_{\tau_i+\ell^*}\rbrace) = \tilde a(X_u,i)
    \end{align}
    The true score $\tilde a(X_u,i)$ is iid by assumption. For these reasons $s_{u,t}$ follows $Q_s$.
    
    Since the $p$-value is calculated by comparing the score of a data point with the scores from a calibration set, it can be deduced that the $p$-values are identically distributed.
    
    \textbf{Step 2: The $p$-values are identically distributed for $t$ sufficiently large.}
    
    Let $u$ be in $\llbracket 1, T \rrbracket$, according to Eq.~\ref{eq:pvalue-ideal-bkad} the $p$-value is estimated as the following:
    \begin{align}
    	p_{u,t} &= p(s_{u,t}, \mathcal S_{t})\\
    	\mbox{with }\mathcal S_{t} &= \lbrace s_{h(t-m,1),t},\ldots, s_{h(t-m,n),t}\rbrace 
    \end{align}
     By definition of $h$: $h(t-m, j)\leq t-m$, and thus according to the previous paragraph, all elements of $\mathcal S_{t}$ are identically distributed. All $p$-values associated with a score that follows the $\mathcal Q_s$ distribution follow the same distribution. In particular, all $p$-values $p_{u,t}$ follow the same distribution $Q_p$, as soon as $t>u+m$.
    
    The status of a point depends only on the $p$-value of the point and the $p$-values used to calculate the data-driven threshold. It has been shown that the $p$-values follow the same distribution. In the next paragraph, it is deduced that the statuses are also identically distributed. 
    
     \textbf{Step 3: The decision series $(d_{u,t})$ is identically distributed for $t$ sufficiently large.}
     
    Let $i$ be in $\llbracket 1, D\rrbracket$ and defining a segment $\llbracket \tau_i, \tau_{i+1}-1\rrbracket$. The cases are separated according to the position of the point in the real segment: at the end of the segment or in the middle of the segment.
    
    $\qquad$ Case 1: The data point belongs to the end of the segment, $u \in \llbracket \tau_{i+1}-m, \tau_{i+1}-1\rrbracket$. Two steps are required. First it is shown that $d_{u,t}$ verifies the property for $t=u+m$, then it is shown that for any $t>u+m$, $d_{u,t} =d_{u,u+m}$.
    
    First, let $t=\tau_{i+1}+m$. According to (\ref{assume.segmentation}): $\hat \tau(t) \cap \llbracket 1,\tau_{i+1}\rrbracket = \tau \cap \llbracket 1,\tau_{i+1}\rrbracket$. 
    Thus, the current segment at time $t$ is equal to $\llbracket \tau_{i+1},t\rrbracket$ and has exactly $m$ points. Then, since the rule of closing the previous segment from Eq.~\ref{eq:closing-seg-ideal-bkad} is applied, it gives
    \begin{align*}
    	d_{u,t} = \mathbb{1}[\hat p_e(s_{u,t},\mathcal S_{cal,t})< \varepsilon(\hat p_e(s_{\tau_{i+1}-m,t},\mathcal S_{cal,t}),\ldots,\hat p_e(s_{\tau_{i+1}-1,t},\mathcal S_{cal,t}))]
    \end{align*}
    Since all score variables $s_{\tau_{i+1}-m,t}, \ldots, s_{\tau_{i+1}-1,t}$ follow the distribution $\mathcal Q_s$, according to (\ref{assume.score}), and all $p$-values are computed using the same calibration set $\mathcal S_{cal,t}$, then $d_{u,t}$ follows $\mathcal Q_d$.
    
    Then, for $t>\tau_{i+1}+m$, $d_{u,t}=d_{u,t-1}$. This implies that the limit value $d_{u,\infty}$ is equal to $d_{u,\tau_{i+1}+m}$ which follows the distribution $\mathcal Q_{d}$.
    
    \medskip
    
     $\qquad$ Case 2: The data point belongs to the middle of the segment, $u \in \llbracket \tau_i, \tau_{i+1}-m\rrbracket$. 
     
     To prove that $d_{u,t}$ follows the distribution $\mathcal Q_d$ for all $t\geq u+m$, we have three steps. First, it is shown that the property holds for $t=u+m$. Then it is shown that the status $d_{u,t}$ may possibly be modified for $t$ in $\rrbracket u+m, u+2m\rrbracket$, but that $d_{u,t}$ always follows the $\mathcal Q_d$ distribution. Finally, it is verified that $d_{u,t}$ is constant from $t>u+2m$.
    
   First, let $t=u+m$. Although by hypothesis $\llbracket u,u+m\rrbracket$ contains no true breakpoints, the $\hat \tau_t$ detector can detect a false breakpoint. Cases are separated according to whether a breakpoint was detected or not.
    \begin{itemize}
    	\item Case 2a, there is no (false) breakpoint in $\llbracket u, u+m\rrbracket$. Thus, the current segment is $\llbracket\tau_i, u+m\rrbracket$, according Eq.~\ref{eq:end-seg-ideal-bkad}. 
    	\begin{align}
    		d_{u,t} = \mathbb{1}[\hat p_e(s_{u,t},\mathcal S_{cal,t})< \varepsilon(\hat p_e(s_{u,t},\mathcal S_{cal,t}),\ldots,\hat p_e(s_{u+m,t},\mathcal S_{cal,t}))]
    	\end{align}
        $d_{u,t}$ follows $\mathcal Q_d$ 
        \item Case 2b, there is a (false) breakpoint in $\llbracket u, u+m\rrbracket$, this breakpoint is noted $\hat b_t$.
        The current segment $\llbracket \hat b_t,u+m\rrbracket$ contains less than $m$ points. Thus, according to Eq.~\ref{eq:closing-seg-ideal-bkad}, $d_{u,t}$ takes the value:
        \begin{align}
        	d_{u,t} = \mathbb{1}[\hat p_e(s_{u,t},\mathcal S_{cal,t})< \varepsilon(\hat p_e(s_{\hat b_t-m,t},\mathcal S_{cal,t}),\ldots,\hat p_e(s_{\hat b_t-1,t},\mathcal S_{cal,t}))]
        \end{align} 
        $d_{u,t}$ follows $\mathcal Q_{d}$.
    \end{itemize}
     When $t=u+m$, in both cases $d_{u,t}$ follows the distribution $Q_d$. Next, check that this property remains true even when $t$ is greater than $u+m$.
     
     Then, for $t$ in $\rrbracket u+m, u+2m\rrbracket$, the cases are split again according to the detection of a breakpoint in $\llbracket u, u+m\rrbracket$:
     \begin{itemize}
     	\item Case 2a': There are no detected breakpoint in $\llbracket u, u+m\rrbracket$. According Eq.~\ref{eq:continu-ideal-bkad}, in this case the status is not updated and $d_{u,t} = d_{u,t-1}$.
     	\item Case 2b': A (false) breakpoint is detected in $\llbracket t-m, u+m\rrbracket\subset\llbracket u, u+m\rrbracket$ and noted $\hat b_t$. The current segment $\llbracket\hat b_t,u+m\rrbracket$ contains less than $m$ points, $d_{u,t}$ is updated according to the rule:
     	\begin{align}
     		d_{u,t} = \mathbb{1}[\hat p_e(s_{u,t},\mathcal S_{cal,t})< \varepsilon(\hat p_e(s_{\hat b_t-m,t},\mathcal S_{cal,t}),\ldots,\hat p_e(s_{\hat b_t-1,t},\mathcal S_{cal,t}))]
     	\end{align}
         Once again, $d_{u,t}$ follows the $\mathcal Q_d$ distribution.
     \end{itemize}
 
     Finally, for $t>u+2m$, assuming (\ref{assume.segmentation}), there is no breakpoint in $\llbracket u,u+m\rrbracket$ and therefore: $d_{u,t}=d_{u,t-1}$
 
      By induction $d_{u,t}$ follows the law $\mathcal Q_d$ as soon as $t$ is greater than $u+m$.
      Therefore the series of final decisions $d_{u,\infty}$ is identically distributed and follows the law $\mathcal Q_d$.
    
      \textbf{Step 4: Numerator, denominator and ratio convergence}
      After having proved that the status series $d_{u,\infty}$ is identically distributed, the following shows that the empirical mean of the series converges to its expectation.

   It was shown in the previous step that $d_{u,\infty}$ is identically distributed. Furthermore, $d_{u_1,\infty}\perp d_{u_2,\infty}$ if $|u_1 - u_2|>m+n$. Using the corollary of Theorem 3 in \cite{blum1963strong} this gives almost surely convergence:
    \begin{align}
    	\lim_{t\rightarrow \infty}\frac{1}{t}R_1^t &=\lim_{t\rightarrow \infty}\frac{1}{t}\sum_{u=1}^t d_{u,\infty} \\ &=\mathbb{E}[d_{1,\infty}]
    \end{align} 
    Similarly, it gives the almost surely convergence of the numerator $FP_1^t$.
    \begin{align}
    	\lim_{t\rightarrow \infty}\frac{1}{t}FP_1^t &=
    	\lim_{t\rightarrow \infty}\frac{1}{t}\sum_{u=1}^t A_ud_{u,\infty} \\
    	&= \mathbb{E}[A_1d_{1,\infty}]
    \end{align}
     
     Since both the numerator and the denominator converge almost surely, this leads to the almost sure convergence of the ratio which is the FDP:
      \begin{align}
     	\lim_{t\rightarrow \infty}FDP_t = \frac{\mathbb{E}[A_1d_{1,\infty}]}{\mathbb{E}[d_{1,\infty}]}
     \end{align} 
     
    \textbf{Step 5: Link with mFDR}
    So far it has been proved that the FDP of the complete series converges to the ratio of the expectation of $d_{1,\infty}$ and $A_1d_{1,\infty}$. In the following, this ratio is linked to the mFDR computed on a subseries of size $m$.
    
     This result comes from the permutation invariance of $\varepsilon(\cdot)$ which gives
     \begin{align*}
     	\mathbb E\left[\sum_{u=1}^m \mathbb{1}[\hat p_{u,t} < \varepsilon(\hat p_{1,t}, \ldots, \hat p_{1,t})]\right] &= \sum_{u=1}^m\mathbb E\left[ \mathbb{1}[\hat p_{u,t} < \varepsilon(\hat p_{1,t}, \ldots, \hat p_{1,t})]\right]\\
     	&= m\mathbb E\left[ \mathbb{1}[\hat p_{1,t} < \varepsilon(\hat p_{1,t},  \ldots, \hat p_{1,t})]\right]\\
     	&= m\mathbb E\left[ d_{1,t}\right]
     \end{align*}
      This implies that the FDP limit can be written as mFDR:
       \begin{align}
       	\lim_{t\rightarrow \infty}FDP_t = \frac{\mathbb{E}[A_1d_{1,\infty}] \times m}{\mathbb{E}[d_{1,\infty}] \times m} =  \frac{\mathbb{E}[FP_1^m]}{\mathbb{E}[R_1^m]} = mFDR_1^m
       \end{align} 
    
\end{proof}
\subsection{Proof of Corollary \ref{corollary-fdr-bkad}}
\begin{proof}[Proof of Corollary \ref{corollary-fdr-bkad}]
	According to Theorem~\ref{thm:ideal-detector}, the FDP of the complete time series is equal to the mFDR of the subseries:
	\begin{align*}
		\lim_{t \rightarrow \infty}FDP_1^t = mFDR_1^m
	\end{align*} 
	
	The following steps of the proof show that $mFDR_1^m = (1-\pi)\alpha$ using various results of \cite{kronert2023fdr}.
	
	First, according to Proposition~3 in \cite{kronert2023fdr}, using BH the $mFDR_1^m$ of the subseries can be expressed using the number of rejections and the FDR on the subseries.
	\begin{align}
		mFDR_1^m = \frac{\mathbb{E}[R_1^m(1)]}{\mathbb{E}[R_1^m]}FDR_1^m\label{eq:mfdr_fdr_proof}
	\end{align} 
	From the assumptions expressed in Eq.~\ref{eq:heuristic.power}, it follows that
	\begin{align*}
		\frac{\mathbb{E}[R_1^m(1)]}{\mathbb{E}[R_1^m]} = 1 + \frac{1-\alpha}{m\pi} 
	\end{align*}
	As described in Definition~\ref{def:ideal-bkpt}, all $p$-values are calculated from a single calibration set, furthermore, the cardinality of the calibration set is equal to $n=\ell m/\alpha'-1$. Then according to Theorem 3.4 in  \cite{marandon2022} and Corollary 3 in \cite{kronert2023fdr}: 
	\begin{align*}
		FDR_1^m = (1-\pi)\alpha' = \left(1 + \frac{1-\alpha}{m\pi} \right)^{-1}(1-\pi)\alpha
	\end{align*}
	
	Injecting the value of $FDR_1^m$ and $\frac{\mathbb{E}[R_1^m(1)]}{\mathbb{E}[R_1^m}$ in Eq.~\ref{eq:mfdr_fdr_proof}, it gives:
	\begin{align}
		mFDR_1^m &= \left(1 + \frac{1-\alpha}{m\pi} \right)\left(1 + \frac{1-\alpha}{m\pi} \right)^{-1}(1-\pi)\alpha\\
		&= (1-\pi)\alpha
	\end{align}
	This result completes the proof.
\end{proof}

\subsection{Proof of Proposition~\ref{prop:stationarity}}
\begin{proof}[Proof of Proposition~\ref{prop:stationarity}]
	By assumption the probability $\PP\left[\W_{t-\lambda,t}\right]$ does not depend on $t$ and is noted $f_\tau(\lambda)$. According to the second assumption, $f_\tau(\lambda)$ converges to $0$ when $\lambda$ tends to $+\infty$. Therefore, by definition of convergence:
	\begin{align*}
		\forall \eta>0, \exists \lambda_\eta>0 \quad \lambda\geq, \forall \lambda_\eta,\quad  f_\tau(\lambda)\leq  \eta.
	\end{align*}
	Moreover, by definition $\lambda = t-u$, it follows that:
	\begin{align*}
		\forall \eta>0,\quad \exists \lambda>0, \quad\forall t\in \llbracket 1, T\rrbracket, \forall u \in \llbracket 1, t\rrbracket, \quad |u-t|\geq \lambda_\eta,\quad  \PP\left[\W_{u,t} \right]\leq  \eta.
	\end{align*}
	The second result is proven using similar arguments.
\end{proof}

\subsection{Proof of Theorem \ref{thm:change-vs-oracle-prob}}\label{appendix:proof-oracle-vs-final}
\begin{proof}[Proof of Theorem~\ref{thm:change-vs-oracle-prob}]
The two statements are proven separately.
First, it is shown that for every $\eta$, the probability that the final status differs from the oracle status is less than $3\eta$. Then, a mixing property allows to prove that the proportion of differences along the time series is less than $3\eta$.
	
	\textbf{Proof of the first statement:}
	To prove the first statement, two steps are taken: first, the final decision about the status of $X_u$ is characterized. According to the way BKAD works, as described in Definition~\ref{def:ideal-bkad}, there are two possibilities: either the final decision is taken when $u$ belongs to the current segment and is not updated thereafter, as stated in Eq.~\ref{eq:continu-ideal-bkad}. Or the status of $X_u$ is updated when the segment to which $u$ belongs is closed, as stated in Eq.~\ref{eq:closing-seg-ideal-bkad}.
	Once the final status has been characterized, the probability that it differs from the oracle status is calculated.
	
    Let $u$ be in $\llbracket 1, T \rrbracket$.
	
	In case there is $t$ such that: $|b_t-t|<m$ and $u \in \llbracket \hat b_t-m,\hat b_t \rrbracket$. Let $t'$ be the largest one. This corresponds to the case where the status of $X_u$ is updated after detecting a breakpoint which closes the segment to which $u$ is assigned.
	According to Eq.~\ref{eq:closing-seg-ideal-bkad}
	\begin{align}
		d_{u,t'} = \mathbb{1}[\hat p_{u,t'}<\varepsilon(\hat p_{u,b_{t'}-m},\ldots,p_{u,b_{t'}} )]
		\end{align}
	
	Otherwise, let $t'=u+m$. This corresponds to the case where the final status associated with $X_u$ is taken at the time it belongs to the current segment.
	According to Eq.~\ref{eq:continu-ideal-bkad}
	\begin{align}
		d_{u,t'} = \mathbb{1}[\hat p_{u,t'}<\varepsilon(\hat p_{u,u},\ldots,p_{u,t+u} )]
	\end{align}

    By definition of $t'$, $d_{u,t'}$ is the final status associated with $X_u$. Therefore, it is of interest to know the probability, noted $\mathbb P(\overline V_{u,t'})$, that $d_{u,t'}$ is equal to the oracle status.
    
    According to the law of total probabilities on the event that the assigned segment change, noted $W_{\overline u,t'}$ and defined in Eq.~\ref{eq:delay}.
    $\overline{u} = b_{t'}$
    \begin{align}
    	\mathbb P(\overline V_{u,t'}) &= 	\mathbb P(\overline V_{u,t'}|W_{\overline{u},t'})\PP (W_{\overline{u},t'}) + \mathbb P(\overline V_{u,t'}|\overline W_{\overline{u},t'})\PP (\overline W_{\overline{u},t'})\\
    	&\leq \mathbb P(\overline V_{u,t'}|W_{\overline{u},t'}) + \PP (\overline W_{\overline{u},t'})\label{eq:bound_v_prob}
    \end{align}	
    Now let's upper bound the different terms on the right-hand side of Eq.~\ref{eq:bound_v_prob}.
    According the definition of $\ell_\eta \leq m$ in in Proposition~\ref{prop:stationarity}:
    \begin{align}
    	\mathbb P(\overline V_{u,t'}|W_{\overline{u},t'}) \leq \eta \label{eq:bound_v_cond_w}
    \end{align} 
    
    Then the probability of $X_{\overline u}$ changing its assigned segment at $\llbracket t', \infty\llbracket$ is written by distinguishing the time when this change occurs at $\llbracket t', t'+2m\rrbracket$ or at $\rrbracket t'+2m, \infty \llbracket$ 
    \begin{align}
    	 \PP (\overline W_{\overline{u},t'}) &=  \PP (\exists t'' > t', \tau(t'')\cap\rrbracket\tau_{i}(t'),\tau_{i+1}(t')\llbracket \neq \emptyset  ) \\
    	 &= \PP(\exists t'+2m\geq t'' > t', \tau_{t''}\cap\rrbracket\tau_{i}(t'),\tau_{i+1}(t')\llbracket \neq \emptyset )\\
    	 &\qquad + \PP(\exists t'' > t'+2m, \tau_{t''}\cap\rrbracket\tau_{i}(t'),\tau_{i+1}(t')\llbracket \neq \emptyset )
    \end{align} 
     According to Eq.~\ref{eq:move-b} it gives:
     \begin{align}
     	\PP(\exists t'+2m\geq t'' > t', \tau_{t''}\cap\rrbracket\tau_{i}(t'),\tau_{i+1}(t')\llbracket \neq \emptyset )  \leq \eta \label{eq:bound_w_first}
     \end{align}
 
     According to the definition on $\lambda_\eta\leq m$ in Proposition~\ref{prop:stationarity} and the assumption described by Eq.~\ref{eq:all-current}.
     \begin{align}
     	 \PP(\exists t'' > t'+2m, \tau_{t''}\cap\rrbracket\tau_{i}(t'),\tau_{i+1}(t')\llbracket \neq \emptyset ) \leq \eta \label{eq:bound_w_second}
     \end{align}

    Finally, by injecting the bounding terms in Eq.~\ref{eq:bound_v_prob} using the results from Eq.~\ref{eq:bound_v_cond_w}, \ref{eq:bound_w_first} and \ref{eq:bound_w_second}, it gives:
    \begin{align}
    	\mathbb P(\overline V_{u,t'}) \leq 3\eta
    \end{align}
     
     \textbf{Proof of the second statement:}
	According to the first part of the proof, the probability that the final status is different from the oracle status is less than $3\eta$. But this property is local, valid for each $u$. The aim is to have a global property. For this, a property of ergodicity of $V_{u,t}$ is to be proved. According to \cite{blum1963strong}, it suffices to show that there exists a number $q$ such that if $|t_1-t_2|>q$ then $V_{u,t_1}$ is independent of $V_{u,t_2}$.
	
    According to Eq.~\ref{eq:breakpoint-indp}, breakpoint positions are independent beyond a distance $q_1$.
    According to the ideal BKAD operation, for a given segmentation, the statuses $d_{u,t_1}$ and $d_{u,t_2}$ are independent if $|t_1 - t_2|>m+n$.
    It follows that the random variables in the series $V_{u,t}$ are independent if $|u_1 - u_2|>m+n+q_1$.
	This completes the proof.
\end{proof}

\newpage
\section{Supplementary descriptions about the method}\label{sec:supplement-descriptions}
\subsection{Estimate the probability of segment assignment change}
\label{sec:delay-proba}
As introduced in Section~\ref{sec:th-uncertainty}, $f_\tau(\lambda)$ is the probability that the segment assignment changes when a data point is at distance $\lambda$ from the end of the time series. This probability $f_\tau(\lambda)$ is needed to build the active set containing data points with uncertain status, as described in Algorithm~\ref{alg:active-set}. In the following, a procedure is proposed to estimate $f_\tau(\lambda)$.

\label{sec:delay-proba-method}
As a reminder, the existence of $f_\tau(\lambda)$ is ensured by the stationarity assumption described in Proposition~\ref{prop:stationarity}.
However, stationarity is not sufficient to calculate these probabilities directly from historical data in the same time series and thus to estimate $\hat f_\tau$. It must also be assumed that the series $\mathbb{1}\left[\W_{t-\lambda,t} \right]$ is ergodic. 
\begin{proposition}[Ergodicity]\label{prop:ergodicity}
	Assume $\mathbb{1}\left[\W_{t-\lambda,t} \right]$ is stationary and ergodic. Then
	\begin{align}\label{eq:erg-seg-questioning}
		\mathbb P[\W_{t-\lambda,t}] = \lim_{T \rightarrow \infty} \frac{1}{T}\sum_{\tilde t=1}^{T}\mathbb{1}[\W_{\tilde t-\lambda, \tilde t}] 
	\end{align}
\end{proposition}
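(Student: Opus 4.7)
The plan is to recognize the right-hand side as a time average and the left-hand side as a space average, and then invoke Birkhoff's ergodic theorem. First, I would fix the lag $\lambda \geq 0$ and introduce the process $Y_{\tilde t} := \mathbb{1}[\mbox{CASg}_{\tilde t - \lambda, \tilde t}]$. By hypothesis, $(Y_{\tilde t})_{\tilde t \geq \lambda+1}$ is stationary and ergodic. Moreover $Y_{\tilde t}\in\{0,1\}$, so $Y_{\tilde t}\in L^{1}$ trivially, with $\mathbb{E}[|Y_{\tilde t}|]\leq 1 <\infty$, which supplies the integrability hypothesis needed to apply the ergodic theorem.

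Next I would apply Birkhoff's pointwise ergodic theorem to the process $(Y_{\tilde t})$. It yields, almost surely and in $L^{1}$,
\begin{equation*}
\lim_{T\rightarrow\infty}\frac{1}{T}\sum_{\tilde t=1}^{T} Y_{\tilde t} \;=\; \mathbb{E}\!\left[Y_{1}\,\big|\,\mathcal{I}\right],
\end{equation*}
where $\mathcal{I}$ is the invariant $\sigma$-algebra of the shift acting on $(Y_{\tilde t})$. The ergodicity assumption makes $\mathcal{I}$ trivial (up to null sets), so the conditional expectation collapses to the unconditional one: $\mathbb{E}[Y_{1}\,|\,\mathcal{I}] = \mathbb{E}[Y_{1}]$ almost surely.

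Finally I would translate the expectation back to the probability of interest. Since $Y_{\tilde t}$ is an indicator, $\mathbb{E}[Y_{\tilde t}] = \mathbb{P}[\mbox{CASg}_{\tilde t-\lambda,\tilde t}]$, and by stationarity this value does not depend on $\tilde t$; in particular it equals $\mathbb{P}[\mbox{CASg}_{t-\lambda,t}]$ for the reference index $t$ fixed in the statement. Combining with the previous display gives the claimed identity
\begin{equation*}
\mathbb{P}[\mbox{CASg}_{t-\lambda,t}] \;=\; \lim_{T\rightarrow\infty}\frac{1}{T}\sum_{\tilde t=1}^{T}\mathbb{1}[\mbox{CASg}_{\tilde t-\lambda,\tilde t}].
\end{equation*}
The only subtle step is the use of the pointwise ergodic theorem: its validity is guaranteed by the stated stationarity and ergodicity of $(Y_{\tilde t})$ together with integrability, all of which hold here for free. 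The main modeling obstacle is not in the proof itself but in justifying that $(Y_{\tilde t})$ really is stationary and ergodic in practice (i.e.\ that the underlying breakpoint process and change mechanism are themselves stationary and ergodic, as already discussed after Proposition~\ref{prop:stationarity}).
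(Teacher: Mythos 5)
Your proof is correct and follows essentially the same route as the paper, which simply asserts that the identity ``follows directly from the definition of ergodic process''; you merely make that assertion precise by invoking Birkhoff's pointwise ergodic theorem, checking integrability of the bounded indicator, and using ergodicity to trivialize the invariant $\sigma$-algebra. The only negligible boundary issue (the indicator is only defined for $\tilde t \geq \lambda+1$ while the sum starts at $\tilde t =1$) disappears in the limit and is ignored by the paper as well.
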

The conclusion of the Proposition~\ref{prop:ergodicity} follows directly from the definition of ergodic process \cite{gray2009probability}.

This ergodicity property can be used to derive a learning algorithm. The time series is split into two parts: historical and recent data. The historical data set is built using the first $\tilde T$ data points of the time series.
The estimation of $f_\tau$ is based on the previous segment assignment changes that occurred while detecting breakpoints on historical data.
To estimate this probability, the list of all previous segmentations $\mathcal{D} =\left(\hat\tau_{1},\ldots, \hat\tau_{\tilde T}\right)$ is used. 
Assuming stationarity and ergodicity of $\mathbb{1}\left[\W_{t-\lambda,t} \right]$, where the event $\W_{t-\lambda,t}$ is described in Eq.~\ref{eq:delay}, these historical data are used to estimate $f_\tau(\lambda)$ using Eq.~\ref{eq:erg-seg-questioning}. 
\begin{align}
	\PP\left[\W_{t-\lambda,t} \right] \approx  \frac{1}{\tilde T}\sum_{t=1}^{\tilde T} \mathbb{1}\left[\W_{\tilde t, \tilde t-\lambda}^{\tilde T}\right] = \hat f_\tau(\lambda)
\end{align} 
where $\W_{\tilde t, \tilde t-\lambda}^{\tilde T} = \left\lbrace\exists t' \in \llbracket \tilde t,\tilde T\rrbracket,\quad \hat \tau_{t'} \cap \llbracket \hat b_{\tilde t}, \tilde t-\lambda\rrbracket \neq \emptyset\right\rbrace$.

However, to improve computation time, the following expression of $\W_{\tilde t, \tilde t-\lambda}^{\tilde T}$ is preferred:
\begin{equation}
	\W_{\tilde t-\lambda,\tilde t}^{\tilde T}= \left\lbrace\left(\bigcup_{\tilde T\geq t'>\tilde t}\hat \tau_{t'}\right) \cap \llbracket \hat b_{\tilde t}, \tilde t-\lambda\rrbracket \neq \emptyset\right\rbrace\label{eq:exact-delay}
\end{equation}
With this formulation, each breakpoint is checked only once to see if it belongs to $\llbracket \hat b_{\tilde t}, \tilde t-\lambda\rrbracket$. Indeed, many breakpoints remain at the same position from one step to the next step while applying the breakpoint detection procedure.

Algorithm~\ref{alg:exact-delay} implements Eq.~\ref{eq:exact-delay} to give an estimation of $f_\tau$. Where $I_{\tilde t,u} = \mathbb{1}[\W_{\tilde t, u}^{\tilde T}]$ and $S_{\lambda} = \sum_{\tilde t=\lambda}^{\tilde T} I_{\tilde t, t-\lambda}$  so $\hat f_\tau(\lambda) = \frac{S_\lambda}{\tilde T-\lambda}$.

\begin{algorithm}[tb]
	\begin{algorithmic}
		\Require $(\tau(\tilde t))_1^{\tilde T}$ list of successive segmentations
		\State $I, S \gets 0$
		\State $\tau_{global} \gets \emptyset $
		\For{$\tilde t\in[\tilde T,1]$}
		\State $\tau_{global} \gets \tau_{global}\cup \hat\tau({\tilde t})$
		\For{$u\in[\hat b_{\tilde t}, \tilde t]$}
		\For{$b' \in \tau_{global} $ }  \Comment{$b'$ is a breakpoint} 
		\If{$ \hat b_{\tilde t}<b' \leq u$}
		\State $I_{\tilde t,u} \gets 1$ 
		\EndIf 
		\EndFor
		\EndFor
		\EndFor
		\For{$\lambda \in [0,\tilde T]$}
		\For{$\tilde t \in [\lambda, \tilde T]$}
		\State{$S_{\lambda}\gets S_\lambda + I_{\tilde t,\tilde t-\lambda}$}
		\EndFor
		\State $\hat f_{\tau,\lambda} \gets S_{\lambda}/(\tilde T-\lambda)$
		\EndFor
		\\{\bf Output:} $\hat f_{\tau,\lambda}$ list of $\hat f_\tau(\lambda)$ values for different $\lambda$
		\\\Return $\hat f_{\tau,\lambda}$
	\end{algorithmic}
	\caption{Exact computation of probability of segment assignment change.}\label{alg:exact-delay}
\end{algorithm}

The complexity of the exact computation of $\hat f_\tau$, described in Algorithm~\ref{alg:exact-delay}, is quadratic in time and space, which is a drawback regarding any practical use.
Another version of the implementation of $\hat f_\tau$  is given in Algorithm~\ref{alg:efficient-rt}  which is more convenient for an online context since it is linear in time and space.

Indeed, by observing the evolution of the breakpoints over time (not shown in this paper), it appears that the position of the last breakpoint is the most likely to change, while that of the other breakpoints are generally stable.
This leads us to modify the characterization of the ``assigned segment change'' event by considering only the change caused by the last breakpoint instead of the entire segmentation. 
\begin{equation}\label{assum.lastbreal}
	\forall t ,\lambda \in \llbracket 1,T\rrbracket^2\qquad \mathbb{1}[\W_{t,t-\lambda}^T]= \mathbb{1}\left[\exists t' \in \llbracket t,T\rrbracket,\quad  \hat b_t < \hat b_{t'}\leq t-\lambda \right]\tag{{\bf Last}}
\end{equation}

Under this assumption, the computation of $\hat f_{\tau}(\lambda)$ can be simplified using Proposition~\ref{prop:ftau-simplicafication}.
\begin{proposition}\label{prop:ftau-simplicafication}
	Let $(X_t)_{1\leq t \leq T}$ be a time series of length $T$. Let $(\hat\tau(t))_{1\leq \tilde t\leq \tilde T}$ be the sequence of successive segmentations of the time series. Let $(\mathbb{1}[W_{t,u}^T])_{1 \leq t \leq T, 1 \leq u \leq T}$ the family of ``assigned segment change'' events.
	Assume that the assumption \eqref{assum.lastbreal} is verified.
	Then the estimator $\hat f_\tau$ described in Eq.~\ref{eq:exact-delay} is computed as
	\begin{align}
		\hat f_\tau(\lambda) = \frac{1}{\tilde T}\sum_{\tilde t=1}^{\tilde T}\mathbb{1}\left[r_{\tilde t}>\lambda \right]
	\end{align}
	where $r_{\tilde t}$ is the maximum distance from the the end of the time series with $X_t$ having segment reassigned. It is computed as:
	\begin{align}
		r_{\tilde t} = \max_{t'>\tilde t, \hat b_{t'}> \hat b_{\tilde t}, \hat b_{t'}<\tilde t}\tilde t-\hat b_{t'}
	\end{align}
\end{proposition}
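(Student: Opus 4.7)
The plan is to start from the ergodic representation of $\hat f_\tau$ obtained in Eq.~\ref{eq:exact-delay}, use assumption \eqref{assum.lastbreal} to collapse the whole-segmentation intersection into a condition on the last breakpoint only, and then recast the resulting existential quantifier over $t'$ as a single threshold comparison involving the maximum $r_{\tilde t}$. All the probabilistic content is already absorbed by the ergodicity hypothesis (Proposition~\ref{prop:ergodicity}) and by the ``Last'' assumption, so what remains is a purely combinatorial rewriting carried out termwise in the empirical mean.

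First I would take the empirical-mean expression
\[
\hat f_\tau(\lambda) = \frac{1}{\tilde T}\sum_{\tilde t=1}^{\tilde T}\mathbb{1}\left[\mbox{CASg}_{\tilde t, \tilde t-\lambda}^{\tilde T}\right]
\]
and, for each $\tilde t$, replace the indicator by the simplified form granted by \eqref{assum.lastbreal}, namely
\[
\mathbb{1}\left[\mbox{CASg}_{\tilde t, \tilde t-\lambda}^{\tilde T}\right] = \mathbb{1}\left[\exists\, t' \in \llbracket \tilde t, \tilde T\rrbracket,\ \hat b_{\tilde t} < \hat b_{t'} \leq \tilde t - \lambda \right].
\]
This reduces the check on the whole intersection $\bigcup_{t'>\tilde t}\hat\tau_{t'} \cap \llbracket \hat b_{\tilde t}, \tilde t-\lambda\rrbracket$ to a check on the sole position $\hat b_{t'}$, which is exactly what makes the online algorithm linear.

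Next I would introduce the set $A_{\tilde t} = \{\, \tilde t - \hat b_{t'} \,:\, t' > \tilde t,\ \hat b_{\tilde t} < \hat b_{t'} < \tilde t \,\}$, which is precisely the index set over which the maximum defining $r_{\tilde t}$ is taken, with the convention $r_{\tilde t} = 0$ when $A_{\tilde t} = \emptyset$. The existence condition of the previous step is equivalent to the existence of an element of $A_{\tilde t}$ of size at least $\lambda$, i.e.\ to a threshold condition on $r_{\tilde t}$. Plugging this equivalence back into the sum gives the announced formula.

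The only genuine obstacle is the boundary bookkeeping: I have to fix a convention for $r_{\tilde t}$ on the empty-max case, and I have to reconcile the strict inequality $\hat b_{t'} < \tilde t$ hard-coded in the definition of $r_{\tilde t}$ with the non-strict inequality $\hat b_{t'} \leq \tilde t - \lambda$ inherited from \eqref{assum.lastbreal} (the two constraints agree for $\lambda \geq 1$, and the $\lambda = 0$ edge case has to be handled separately). Once integer positions and the convention $\max \emptyset = 0$ are adopted, the indicator of ``some element of $A_{\tilde t}$ meets the threshold'' and $\mathbb{1}[r_{\tilde t} > \lambda]$ coincide for every $\tilde t \in \llbracket 1, \tilde T\rrbracket$, and the identity follows termwise. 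Everything else is a one-line substitution, so I expect no further difficulty.
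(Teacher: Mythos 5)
Your proposal is correct and follows essentially the same route as the paper's proof: start from the empirical mean in Eq.~\ref{eq:exact-delay}, apply \eqref{assum.lastbreal} termwise to reduce the event to a condition on the last breakpoint, and convert the existential over $t'$ into the threshold test $\mathbb{1}[r_{\tilde t}>\lambda]$ via the maximum of the set of distances $\tilde t-\hat b_{t'}$. Your extra attention to the empty-max convention and to the strict-versus-non-strict inequality at the $\lambda$ boundary is a point the paper passes over silently, but it does not change the argument.
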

Notice that $r_{\tilde t}$ does not depend on $\lambda$. It is sufficient to calculate $r_{\tilde t}$ once for all $\lambda$. Therefore, it's easy to deduce the value of $f_\tau(\lambda)$ for all $\lambda$. The most demanding part is the computation of $r_{\tilde t}$. Two implementations of $r_{\tilde t}$ computation are proposed. Algorithm~\ref{alg:naive-rt} gives the most naive version, each $r_{\tilde t}$ is calculated one after the other. The problem is that the calculation of $r_{\tilde t}$ is itself linear in the length of the series. Therefore, the time complexity is quadratic.
Algorithm~\ref{alg:efficient-rt} improves the computation by swapping the two loops. This limits the total number of comparisons performed. In the second loop, $\tilde t$ takes on the values between $b_t'$ and $t'$. The number of values taken by $\tilde t$ is the length of a segment, not a the length of the time series. Algorithmic complexity is therefore linear. 

\begin{proof}[Proof of Proposition \ref{prop:ftau-simplicafication}]
	Based on Eq.~\ref{eq:exact-delay},  the estimator $\hat f_{\tau}$ is given by: 
	\begin{align*}
		\hat f_\tau(\lambda) = \frac{1}{\tilde T}\sum_{t=1}^{\tilde T}\mathbb{1}\left[\W_{\tilde t,\tilde t-\lambda}^{\tilde T}\right]
	\end{align*}
	With assumption \eqref{assum.lastbreal} it gives:
	\begin{align*}
		\mathbb{1}\left[\W_{\tilde t,\tilde t-\lambda}^{\tilde T}\right]&= \mathbb{1}\left[\exists t' \in \llbracket \tilde t,\tilde T\rrbracket,\quad  \hat b_{\tilde t} < \hat b_{t'}\leq \tilde t-\lambda \right]
	\end{align*}
	The inequality $\hat b_{\tilde t} < \hat b_{t'} \leq \tilde t-\lambda$ is equivalent to $\hat b_{\tilde t} < \hat b_{t'}$ and $\tilde t-\hat b_{t'}>\lambda$ which gives
	\begin{align*}
		\mathbb{1}\left[\W_{\tilde t,\tilde t-\lambda}^{\tilde T}\right]&= \mathbb{1}\left[\bigcup_{t'>\tilde t, \hat b_{t'}> \hat b_{\tilde t}, \hat b_{t'}<\tilde t}\tilde t-\hat b_{t'}>\lambda \right]\\
	\end{align*}
	Since, a set contains a number greater than $\lambda$, if and only if its maximum is greater than $\lambda$, it gives:
	\begin{align*}
		\mathbb{1}\left[\W_{\tilde t,\tilde t-\lambda}^{\tilde T}\right] &= \mathbb{1}\left[\left(\max_{t'>\tilde t, \hat b_{t'}> \hat b_{\tilde t}}\tilde t-\hat b_{t'}\right)>\lambda \right]
	\end{align*}
	Since $\lambda>0$, when $\hat b_{\tilde t} < \hat b_{t'}$ and $\tilde t-\hat b_{t'}>\lambda$  it also implies that $\tilde t\geq\hat b_{t'}$ so
	$\mathbb{1}\left[\left(\max_{t'>\tilde t, \hat b_{t'}> \hat b_{\tilde t}}\tilde t-\hat b_{t'}\right)>\lambda \right] = \mathbb{1}\left[\left(\max_{t'>\tilde t, \hat b_{t'}> \hat b_{\tilde t}, \hat b_{t'}<\tilde t}\tilde t-\hat b_{t'}\right)>\lambda \right]$. 
	The number $r_{\tilde t}$ is introduced as equal to $\max_{t'>\tilde t, \hat b_{t'}> \hat b_{\tilde t}, \hat b_{t'}<\tilde t}\tilde t-\hat b_{t'}$. 
	The $\hat f_\tau$ estimator can be written as follows
	\begin{align*}
		\hat f_\tau(\lambda) = \frac{1}{\tilde T}\sum_{\tilde t=1}^{\tilde T}\mathbb{1}\left[r_{\tilde t}>\lambda \right]
	\end{align*}
\end{proof}

\begin{minipage}{0.45\linewidth}
	\begin{algorithm}[H]
		\begin{algorithmic}
			\For{$\tilde t$ in $\llbracket 1, \tilde T\rrbracket$}
			\For{$t'$ in $\llbracket \tilde t, \tilde T\rrbracket$}
			\If{$\hat b_{t'} < \tilde t$ and $\hat b_{\tilde t} > \hat b_{t'}$}
			\State{$r_{\tilde t} = \max(r_{\tilde t}, \tilde t-\hat b_{t'})$}
			\EndIf
			\EndFor
			\EndFor
		\end{algorithmic}
		\caption{Naive computation of $r_{\tilde t}$.}
		\label{alg:naive-rt}
	\end{algorithm}
\end{minipage}
\begin{minipage}{0.49\linewidth}
	\begin{algorithm}[H]
		\begin{algorithmic}
			\For{$t'$ in $\llbracket 1, \tilde T\rrbracket$}
			\For{$\tilde t$ in $\llbracket \hat b_{t'}, t'\rrbracket$}
			\If{$\hat b_{\tilde t} > \hat b_{t'}$}
			\State{$r_{\tilde t} = \max(r_{\tilde t}, \tilde t-\hat b_{t'})$}
			\EndIf
			\EndFor
			\EndFor
		\end{algorithmic}
		\caption{Efficient computation of $r_{\tilde t}$.}
		\label{alg:efficient-rt}
	\end{algorithm}
\end{minipage}

\begin{figure}[!b]
	\centering
	\includegraphics[width=0.6\linewidth]{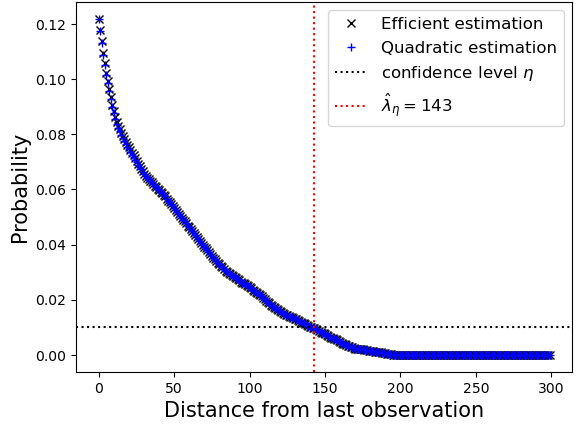}
	\caption{Probability of assignment change as a function of distance to time series end.}
	\label{fig:activ-exp-req}
\end{figure}

Figure~\ref{fig:activ-exp-req} shows an example of the estimated probability of a segment assignment change according to the two  estimation Algorithms~\ref{alg:exact-delay} and \ref{alg:efficient-rt}.
The two algorithms give almost identical results, as shown in Figure~\ref{fig:activ-exp-req}. The selected $\hat\lambda_\eta$ is equal to $143$, in both cases. This supports the assumption that \eqref{assum.lastbreal} is verified. 
This result is used in the rest of this article to define the value of $\hat \lambda_\eta$, the minimum length of the current segment in Algorithm~\ref{alg:active-set}.
More details about the experiment can be found in the Section 1 in the Supplementary Material.

\subsection{Estimate the probability that the point will have a status different from that of the oracle if the point is assigned to the same segment as the oracle.}\label{sec:stable-seg-change}
As introduced in Section~\ref{sec:th-uncertainty}, $f_d(\ell)$ is the probability that the status of a point changes under the conditions the last breakpoint remains unchanged and the segment cardinality is equal to $\ell$. This probability $f_d(\ell)$ is necessary to build the active set containing data points with uncertain status, as described in Algorithm~\ref{alg:active-set}. In this section, a procedure to estimate $f_d(\ell)$ is proposed.

\begin{figure}[tb]
	\centering
	\begin{subfigure}{0.35\linewidth}
		\centering
		\includegraphics[width=0.7\linewidth]{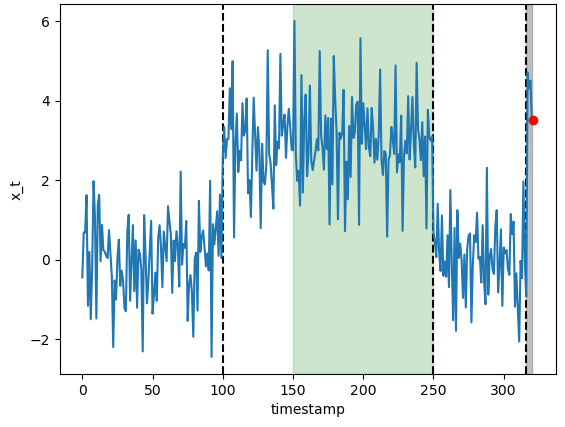}
		\caption{Starting new segment with small length.}
		\label{fig:zscore-segsize-a}
	\end{subfigure}
	\begin{subfigure}{0.35\linewidth}
		\centering
		\includegraphics[width=0.7\linewidth]{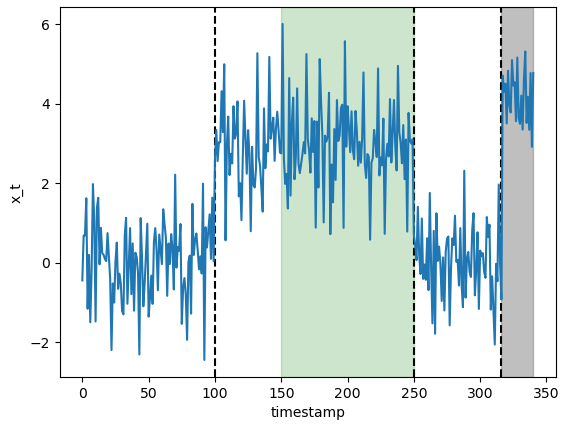}
		\caption{Data points added to the current segment becoming larger.}
		\label{fig:zscore-segsize-b}
	\end{subfigure}
	\begin{subfigure}{0.35\linewidth}
		\centering
		\includegraphics[width=0.7\linewidth]{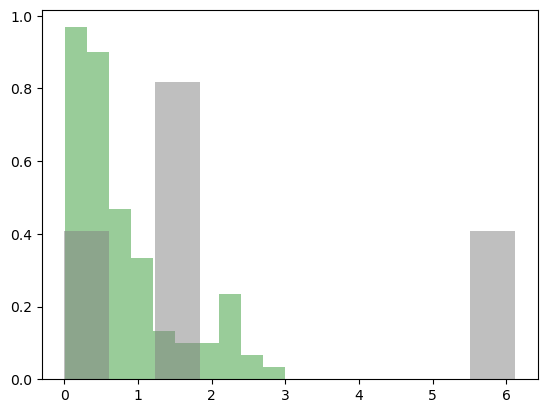}
		\caption{Z-score for small segment}
		\label{fig:zscore-segsize-c}
	\end{subfigure}
	\begin{subfigure}{0.35\linewidth}
		\centering
		\includegraphics[width=0.7\linewidth]{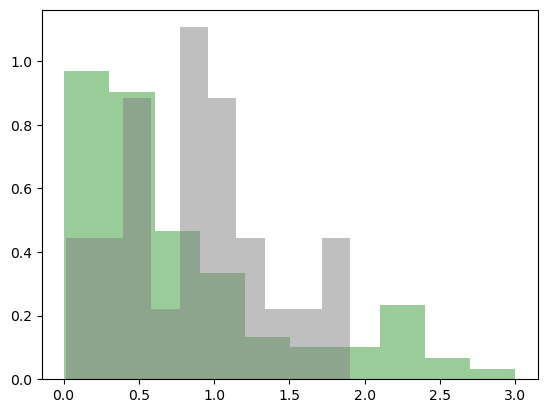}
		\caption{Z-score for larger segment}
		\label{fig:zscore-segsize-d}
	\end{subfigure}
	\caption{Atypicity score estimation according to the length of the current segment.}
	\label{fig:zscore-segsize}
\end{figure}

Figure~\ref{fig:zscore-segsize}  illustrates how the the length of the current segment  has an influence on the accuracy of the atypicity score estimation and consequently on the uncertainty of a data point status.
Indeed, Figure~\ref{fig:zscore-segsize-a} shows a time series with a newly detected current segment highlighted in gray color, and a calibration set in green color inside the previous segment. The atypicity scores, $z$-scores based on the mean and the standard deviation, are shown in Figure~\ref{fig:zscore-segsize-c} computed for the current segment in gray and the calibration set in green.
Since the current segment has few points, its $z$-score estimation shows a high discrepancy compared to the score distributions of the calibration set, despite the fact that there are no anomalies. 
As shown in Figure~\ref{fig:zscore-segsize-d}, when new data points are added, the estimation of the abnormality score of the current segment is more accurate.

This example highlights that the status of a data point can change even if the breakpoints remain unchanged, whereas Appendix~\ref{sec:delay-proba} deals only with the case where the change in status is due to a change in the detected breakpoints. Uncertainty also comes from having too few points in the current segment, leading to score estimation errors. Estimating the probability $\hat f_d(\ell)$ is useful to build the active set that takes this into account. 
The following section suggests a procedure to estimate the probability $f_d(\ell)$.

\label{sec:f2method}
The method is based on the learning phase using a set $\mathcal D$ of historical detected segments having a low probability to change (using final step $T$). This training set $\mathcal \mathcal D$ is defined by,
\begin{align}
	\mathcal{D} = \lbrace (X_1,\ldots, X_{\hat \tau_{1}(T)}), (X_{\hat \tau_{1}(T)+1},\ldots, X_{\hat \tau_{2}(T)}), \ldots, (X_{\hat \tau_{D-1}(T)+1},\ldots,X_{\hat \tau_{D}(T)})\rbrace
\end{align}
In the following, the training procedure  is based on six different steps needed to estimate the $\hat f_d(\ell)$ probability.
Let $\overline a$ be the NCM (Non Conformity measure), used to define the atypicity score, as described in Appendix~\ref{sec:score}. As a reminder, $\overline a(S,x)$, measures the ``nonconformity'' between the set $S$ and the point $x$.

The principle of the training phase is to simulate, using resampling, numerous examples where the current segment changes from a length $\ell$ to the final length. At each case, anomaly detection is applied to the test set of cardinality $m$ and the proportion of statuses that have changed by modifying the length of the current segment is measured. The status obtained from the maximum size segment is the oracle status. By comparing it with the status obtained with the $\ell$ size segment, the confidence score can be approximated.
Since the breakpoints are assumed to be stable, the simulation is inspired by the description of the detector given in Section~\ref{alg:generalise-sbad}, without the parts concerning breakpoint detection.
These steps are repeated $B$ times. Let $b \in \llbracket 1, B\rrbracket$:
\begin{enumerate}[label=Step~\arabic*:]
	\item Figure~\ref{fig:resampling-step-by-step-a} illustrates that two segments are resampled from the historical data set $\mathcal D$. $\mathcal S_{1,b}$ is considered as the calibration set and $\tilde{\mathcal S}_{2,b}$ as a current segment if the whole time series where observed (see Definition~\ref{def:oracle-status}) in the simulation:
	$$\mathcal S_{1,b},\tilde{\mathcal S}_{2,b}  \sim U(\mathcal D) $$
	\item The current segment $\tilde{\mathcal S}_{2,b}$ is sub-sampled into a smaller segment of length $\ell$ and noted $ S_{2,\ell,b}$, as shown in Figure~\ref{fig:resampling-step-by-step-c}. $ S_{2,\ell,b}$ is considered as the same segment than $\tilde{\mathcal S}_{2,b}$ without knowledge of the whole time series, having only $\ell$ points, . 
	$$\mathcal S_{2,\ell,b}\sim U(\tilde{\mathcal S}_{2,b})$$.
	\item The current segment $\tilde{\mathcal S}_{2,b}$ is sub-sampled into an other segments of length $m$ and noted $\overline S_{2,m, b}$. $\overline S_{2,m,b}$ is considered as the test set.  
	$$\overline {\mathcal S}_{2,m,b}\sim U(\tilde{\mathcal S}_{2,b})$$
	\item As illustrates in Figure~\ref{fig:resampling-step-by-step-d} and \ref{fig:resampling-step-by-step-e}, the scores of the three segments are computed:
	\begin{itemize}
		\item The score of $\mathcal S_{1,b}$ (``calibration set''): 
		\begin{align*}
			\forall i \in \llbracket 1, n\rrbracket, X_i \in  \mathcal S_{1,b},\quad c_{i,b} = \overline a(Y_i, \mathcal S_{1,b}\textbackslash \{X_i\})
		\end{align*}
		\item The score of the test set using $\tilde{\mathcal S}_{2,b}$ as training set: 
		\begin{align*}
			\forall i \in \llbracket 1, m\rrbracket, Y_i \in \overline S_{2,m,b},\quad \tilde s_{i,b} = \overline a(Y_i, \tilde{\mathcal S}_{2,b}\textbackslash \{Y_i\})
		\end{align*}
		\item The score of the test set using $S_{2,\ell, b}$ as a training set: 
		\begin{align*}
			\forall i \in \llbracket 1  m\rrbracket, Y_i \in \overline S_{2,m,b},\quad  s_{i,\ell,b} = \overline a(Y_i, S_{2,\ell,b}\textbackslash \{Y_i\})
		\end{align*}
	\end{itemize}
	\item Figure~\ref{fig:resampling-step-by-step-f} illustrates that the empirical $p$-values are computed for the two scores obtained from the test set  using the same calibration set:
	\begin{itemize}
		\item $p$-values of the test set when using the complete current segment as training set: $\forall i \in \llbracket 1, m\rrbracket, \tilde p_{i,b} = \frac{1}{n}\sum_{j=1}^n\mathbb{1}[\tilde s_{i,b}<c_{j,b}]$
		\item $p$-values of the test set when using the length $\ell$ sub-sample of the current segment as training set:  $\forall i \in \llbracket 1, m\rrbracket,  p_{i,\ell, b} = \frac{1}{n}\sum_{j=1}^n\mathbb{1}[ s_{i,\ell,b}<c_{j,b}]$
	\end{itemize}
	\item Detect the anomalies in the two cases, by applying the Benjamini-Hochberg procedure $\hat \varepsilon_{BH_\alpha}$ on the estimated $p$-values, as shown in Figure~\ref{fig:resampling-step-by-step-g}:
	\begin{itemize}
		\item In case the training set is the entire current segment:
		\begin{align*}
			\forall i \in \llbracket 1,m\rrbracket,\quad \tilde d_{i,b}=\mathbb{1}[\tilde p_{i,b}<\hat\varepsilon_{BH_\alpha}(\tilde p_{1,b},...,\tilde p_{m,b})]
		\end{align*}
		\item In case the training set is the sub-sample of cardinality $\ell$: 
		\begin{align*}
			\forall i \in \llbracket 1, m\rrbracket, \quad  d_{i,\ell, b}=\mathbb{1}[ p_{i,\ell,b}<\hat \varepsilon_{BH_\alpha}(p_{1,\ell,b},...,p_{m, \ell,b})]
		\end{align*}
	\end{itemize}
	\item The number of decisions that differ between the two cases, $\tilde{\mathcal S}_{2,b}$  or $\mathcal S_{2,\ell,b}$ used as the training set, is computed:
	$$n_d = \sum_{i=1}^m \mathbb{1}[\tilde d_{i,b} \neq  d_{i,\ell,b}] $$ 
\end{enumerate}
The training procedure simulates the behavior of the online anomaly detector:
$\mathcal S_1$ plays the role of the calibration set. $\tilde{\mathcal S}_2$ plays the role of current segment with knowledge of the whole time series. $\mathcal S_{2,\ell}$ plays the role of the current segment at the beginning of a new segment, that contains only $\ell$ points. The first $m$ elements $Y_1, \ldots, Y_m$ from $\tilde{\mathcal S}_2$ constitute the test set.

\newpage
\begin{figure}[H]
	\centering
	\begin{subfigure}{0.8\linewidth}
		\includegraphics[width=0.9\linewidth]{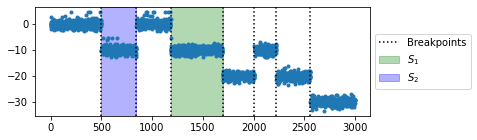}
		\caption{Step 1: Segments resampling}
		\label{fig:resampling-step-by-step-a}
	\end{subfigure}
	\begin{subfigure}{0.8\linewidth}
		\includegraphics[width=0.9\linewidth]{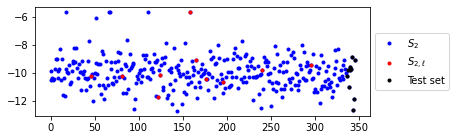}
		\caption{Step 2 and 3: Sub-sampling}
		\label{fig:resampling-step-by-step-c}
	\end{subfigure}
	\begin{subfigure}{0.8\linewidth}
		\includegraphics[width=0.9\linewidth]{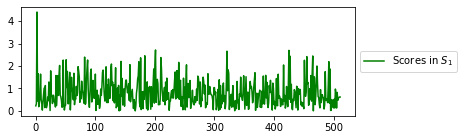}
		\caption{Step 4: Calibration set scoring}
		\label{fig:resampling-step-by-step-d}
	\end{subfigure}
	\begin{subfigure}{0.8\linewidth}
		\includegraphics[width=0.9\linewidth]{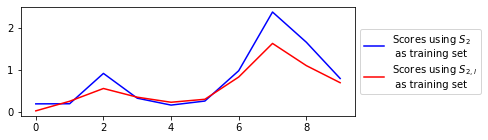}
		\caption{Step 5: Test set scoring}
		\label{fig:resampling-step-by-step-e}
	\end{subfigure}
	\begin{subfigure}{0.8\linewidth}
		\includegraphics[width=0.9\linewidth]{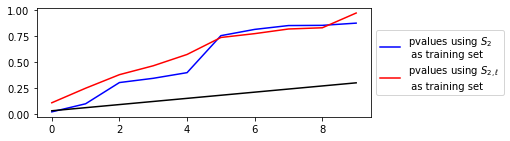}
		\caption{Step 6: $p$-value estimation}
		\label{fig:resampling-step-by-step-f}
	\end{subfigure}
	\begin{subfigure}{0.8\linewidth}
		\includegraphics[width=0.9\linewidth]{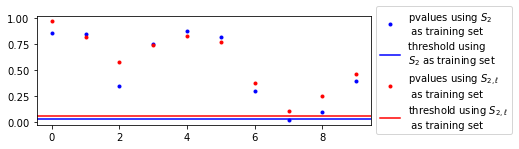}
		\caption{Step 7: Anomaly detection}
		\label{fig:resampling-step-by-step-g}
	\end{subfigure}
	\caption{Illustration of the different steps of the training procedure to estimate the status change  probability under stable breakpoints.}
	\label{fig:resampling-step-by-step2}
\end{figure}

Assuming the score are iid as stated in Definition~\ref{def:score-homogeneity}, by repeating this resampling process many times, as the length of the time series converges to infinity, the proportion of status changes converges to the expectation, according to the law of large numbers \cite{csorgő1992law}:
\begin{align*}
	\lim_{T,B \rightarrow \infty}\frac{1}{mB}\sum_{b=1}^B \sum_{j=1}^m \mathbb{1}[\tilde d_{j,b} \neq d_{j,\ell,b}] = \mathbb E_{\mathcal S_1,\mathcal S_2\sim U(\mathcal D)} \mathbb E_{\mathcal S_{2,\ell} \sim U(\mathcal S_2)} \sum_{i=1}^m \mathbb{1}[d_i \neq \tilde d_i] 
\end{align*}

Under the assumptions of score stationarity stated in Definition~\ref{def:score-homogeneity}, the limit is equal to $f_{d}(\ell)$. Indeed, under score stationarity, the calibration set can be built from any segment of the time series. This implies that it is possible to use described training procedure as an estimator of $\hat f_{d}(\ell)$.
\begin{align*}
	\hat f_d(\ell) = \frac{1}{mB}\sum_{b=1}^B \sum_{j=1}^m \mathbb{1}[\tilde d_{j,b} \neq d_{j,\ell,b}] \approx f_d(\ell) 
\end{align*}

Figure~\ref{art2-activ-exp-conf-zscore} illustrates results when applying this algorithm on Gaussian data. Three line charts representing the probability of status change as a function of the current segment length in relation to the initial status: (a) the status is normal, (b) the status is abnormal and (c) unknown status. In the unknown status, Figure~\ref{art2-fig:activ-exp-conf-zscore-general} shows clearly that the probability of status change decreases with the length of the current segment. This probability is higher when the status is abnormal, as shown in Figure~\ref{art2-fig:activ-exp-conf-zscore-rej}. Nevertheless,  with a segment length of $100$, the probability is less than $1\%$. This result is used in the rest of this article to define the value of $\hat \ell_\eta$, the minimum length of the current segment in Algorithm~\ref{alg:active-set}. More details about this experiment can be found in the Supplementary Material in Section 2.

\begin{figure}[tb]
	\centering
	\begin{subfigure}{0.3\textwidth}
		\centering
		\includegraphics[width=0.9\linewidth]{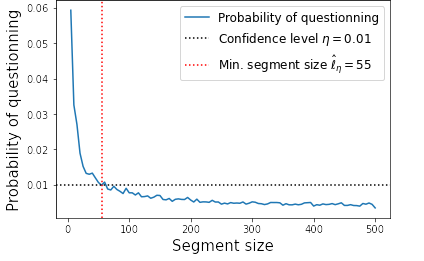}
		\caption{normal status}
		\label{art2-fig:activ-exp-conf-zscore-nonrej}
	\end{subfigure}%
	\begin{subfigure}{0.3\textwidth}
		\centering
		\includegraphics[width=0.9\linewidth]{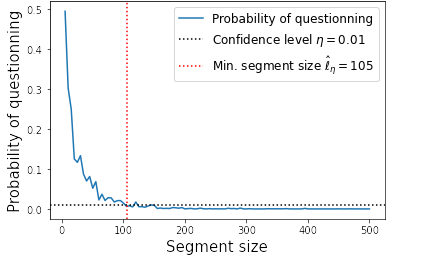}
		\caption{abnormal status}
		\label{art2-fig:activ-exp-conf-zscore-rej}
	\end{subfigure}%
	\begin{subfigure}{0.3\textwidth}
		\centering
		\includegraphics[width=0.9\linewidth]{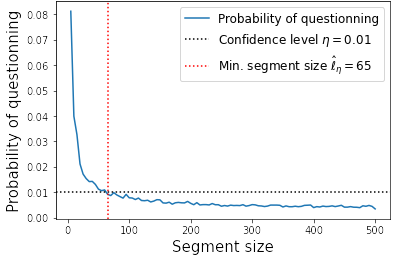}
		\caption{unknown status}
		\label{art2-fig:activ-exp-conf-zscore-general}
	\end{subfigure}%
	\caption{Probability that status changes under stable breakpoints as a function of segment length, for Gaussian data.}
	\label{art2-activ-exp-conf-zscore}
\end{figure}

\subsection{Breakpoint estimation}
\label{sec:breakpoint}
As described at Section~\ref{sec:modelisation-problem} the time series $(X_t)_{1 \leq t \leq T}$ has $D$ breakpoints denoted $\tau_1, ..., \tau_{D+1}$. The segment $X_{\tau_i}^{\tau_{i+1}-1}$ is said homogeneous. Informative features for anomaly detection, such as the mean or the variance, can be extracted if the breakpoints are correctly identified.  
A good breakpoint detector is important to increase the accuracy of anomaly detection.
If a shift is not well detected, the analyzed segment will be heterogeneous and the estimation of the law under $\mathcal H_0$ will be biased. 
If too many breakpoints are detected in a segment while it is homogeneous, the analyzed segments will contain fewer points and the variance of the predictions will be too high.
To maximize the performance of the anomaly detection, the number and the locations of breakpoints have to be accurately estimated. 
The article \cite{truong2020selective} is a review of existing offline breakpoint detectors. The authors show that a breakpoint detector can be described as an optimization problem, using three notions.
\begin{itemize}
	\item \textbf{Cost function}. A cost function $\mathcal C(\cdot)$ measures the homogeneity of a given subseries $X_{t_1}^{t_2}$. With a well chosen cost function, $\mathcal C(X_{t_1}^{t_2})$ is high when there is at least one breakpoint between $t_1$ and $t_2$. The cost function is low when there is no breakpoint in this subseries. 
	\item \textbf{Search method}: The search method enables to explore a set of possible segmentations, denoted $\mathcal T$, of the optimization problem. Each search method is a trade-off between accuracy and computational complexity \cite{truong2020selective}.
	\item \textbf{Penalty function}: The penalty function is useful when the number of breakpoints is unknown. It avoids overestimating the number of breakpoints by penalizing segmentations with a large number of breakpoints. The penalty function $pen(\cdot)$ increases based on the number of breakpoints, noted $D_\tau$.
\end{itemize}
The segmentation returned by the breakpoint detector is the one that minimizes the penalized cost function among the explored solutions:
\begin{align}\label{eq:bkpts-criterion}
	\hat \tau \in \arg\min_{\tau \in \mathcal{T}} \sum_{i=1}^{D_\tau}\mathcal C(X_{\tau_i}^{\tau_{i+1}-1}) + pen(D_\tau)
\end{align}

In this article, the Kernel Change Point (KCP) introduced in \cite{arlotKernelMultipleChangepoint} is used for its advantages. 
The kernel-based cost function could be used for any kind of time series, univariate or multivariate, without changing the breakpoint detector. To handle the diversity of time series data, the kernel and its hyperparameters have to be carefully chosen to be able to detect any kind of change points in the time series.
This accuracy is guaranteed by the oracle inequality given in \cite{arlotKernelMultipleChangepoint}.
For a given segmentation $\tau$ and a kernel $K$, the cost is given by:
\begin{equation}
	\hat R(\tau) = \frac{1}{t}\sum_{u=1}^t K(X_u,X_u) - \frac{1}{t}\sum_{i=1}^{D_\tau} \frac{1}{\tau_{i+1} - \tau_{i}}\sum_{u,v=\tau_i}^{\tau_{i+1}-1}K(X_u, X_v)
\end{equation}
First, the candidate segmentations that minimize the criterion are identified for each possible number of $D$ segments. $\mathcal T^D$ is the space of all candidate segmentations with $D$ segments, $\hat \tau_{D,t}$ is the best candidate segmentation with $D$ segments and $L_{D,t}$ is the cost associated with this segmentation.
\begin{align*}
	L_{D,t} &= \min_{\tau \in \mathcal T^D} \hat R(\tau)\\
	\hat \tau_{D,t} &= \arg\min_{\tau \in \mathcal T^D} \hat R(\tau)\\
\end{align*}
To estimate the number of segments and thus the best segmentation, one searches for the segmentation $\hat \tau_{D,t}$ that minimizes the penalized criterion described in Eq.~\ref{eq:bkpts-criterion}. The penalty function is given by:
\begin{equation}
	pen(\tau) = r_1 D_\tau + r_2 \log \begin{pmatrix}
		t-1\\D_\tau - 1
	\end{pmatrix}
\end{equation}
where the coefficients $r_1$ and $r_2$ are estimated by fitting the penalty function on the estimated cost for over-segmented segmentations \cite{baudry2012slope}.

KCP is designed as an offline breakpoint estimator. By using Dynamic Programming, the segmentation costs can be estimated without performing the same computation between time $t$ and $t+1$ as described in \cite{celisse2018new}. This feature is necessary to be applied in an online anomaly detection. The data driven penalty function enables good accuracy in estimating the number of breakpoints.
The breakpoints are detected by solving the optimization problem with the algorithm:
\begin{algorithm}[tb]
	\begin{algorithmic}
		\Require $T>0$, $(X_t)$ time series, $\mathcal C$ Kernel based cost function, $D_{max}$ maximum breakpoint number explored and $SlopeHeuristic$ implement the slope heuristic.
		\For{$t \in \llbracket 1, T \rrbracket$}
		\For{$D \in \llbracket 1, D_{max} \rrbracket$}
		\State{$L_{D, t} \gets \min_{t' \leq t} L_{D-1, t'} + \mathcal C_{t',t}$}
		\State{$\hat \tau_{D, t} \gets \arg\min_{t' \leq t} L_{D-1, t'} + \mathcal C_{t',t}$}
		\EndFor
		\State{$c_1,c_2 \gets SlopeHeuristic(L)$}
		\State{$\hat D \in \arg\min_{D} L_{D,t} + c_1 D + c_2 \log \begin{pmatrix}
				t-1\\D - 1\end{pmatrix}$}
		\State{$\hat \tau_{t} \gets \hat \tau_{\hat D, t}$}
		\EndFor
		\\{\bf Output:} $\forall t \in \llbracket 1, T\rrbracket, (\tau(t))$ estimated segmentation at each time step.
	\end{algorithmic}
	\caption{Dynamic Programming for breakpoint detection.}
\end{algorithm}

The main degree of freedom in KCP is the choice of the kernel. Characteristic kernels \cite{fukumizu2009kernel}, like the Gaussian kernel, are able to detect any kind of change: shift in the mean, shift in the variance, shift in the third moment,\dots
\begin{equation}
	K(x,y) = \exp(- \frac{||x-y||^2}{2 h^2})
\end{equation}
However, due to the fact that the number of points within a segment is finite, the performance of a characteristic kernel depends on the choice of hyperparameters. In the case of the Gaussian kernel, the only parameter is the bandwidth $h$. For changes that occur in the mean, the \emph{median heuristic}, shown in Eq.~\ref{eq:median-heuristic}, gives good results \cite{garreau2017large}. Defining a method to select the most relevant kernel for any kind of breakpoint is still an open question.
\begin{equation}\label{eq:median-heuristic}
	h = \mbox{median}_{i \neq j}(||X_i-X_j||)
\end{equation}

\subsection{Atypicity score}
\label{sec:score}
In this paper, an anomaly is a data point that does not follow the reference distribution of the segment to which it belongs. To construct an atypicity score that is higher for abnormal points, a point must be compared to the rest of the segment.The Nonconformity Measure (NCM) from \cite{shafer2008tutorial} is introduced. The Nonconformity Measure $\overline a$, is a real valued function $\overline a(z, B)$ that measure how different $z$ is from the set $B$. A nonconformity measure can be used to compare a data point with the rest of the segment. If all points within a segment are generated by the reference distribution, then the Nonconformity Measure provides an atypicity score for this segment.
\begin{align}\label{eq:ncm}
	\forall i \in \llbracket 1 ,D\rrbracket,\quad \forall \llbracket \tau_i, \tau_{i+1}\llbracket, \quad a(X_t) = \overline a(X_t, \lbrace X_{\tau_i}, \ldots, X_{\tau_{i+1}-1}\rbrace \backslash \lbrace X_t \rbrace) 
\end{align}

The following properties are required to enable the usage of the nonconformity measure to build a good atypicity score:
\begin{itemize}
	\item anomalies should have higher atypicity score than normal data points. 
	\item the NCM must be robust \cite{staudte2011robust, rousseeuw2018anomaly, zhou2017anomaly} to the presence of anomalies. The anomalies present in the segment do not affect the value of the returned measure. 
	\item  the values returned between different segments must be comparable, so that a $p$-value can be estimated, with a calibration set containing values from different segments. The iid property of scoring is introduced in Definition~\ref{def:score-homogeneity}, to formalize this idea.
\end{itemize}
The property of score stationarity depends on the time series. For example, the $z$-score with true known mean and standard deviation satisfies the stationarity property only if the changes generated by the  breakpoints are shifts in the mean or in the standard deviation. If the change occurs in higher moments, the property is not satisfied.
Furthermore, the property is not satisfied for the $z$-score if the mean and standard deviation need to be estimated.
Since the stationarity of the score is difficult to obtain, it is approached with the following strategies:
\begin{itemize}
	\item Ensure that the segment contains enough points to ensure the convergence of the nonconformity measures. For example, since the mean and variance must be estimated, the $z$-score is not stationary. However, if these parameters converge to the true mean and standard deviation, then the $z$-score can be considered stationary once the segments have enough points. The faster convergence is achieved, the easier it is to ensure the stationarity property. An NCM is said to be efficient when convergence is achieved for a low number of points.
	\item Instead of using the entire segment, the training set can be built by resampling a specified number of points. It can be used on NCM that  are highly dependent on the training set cardinality, like $k$NN.
	\item  Rather than trying to ensure that the score distribution is identical in each segment, identify the segments with the most similar distribution, as described in Appendix~\ref{sec:calibration}.
\end{itemize}
Many NCMs depend on segment parameters to be estimated, e.g. the $z$-score requires knowledge of the mean and variance.  To satisfy the properties of a good atypicity function, the estimators need to satisfy the following requirements:
\begin{enumerate}
	\item the estimator should be robust to anomalies in the training set: the estimation  should not be affected by the presence of anomalies in the training set.
	%	\item The NCM should tend towards a stationary and ergodic nonconformity measure for the considered time series. 
	\item The estimator should be efficient \cite{koyuncu2009efficient}. High precision estimation of the parameter should be obtained with a minimal number of samples.
\end{enumerate}

For example, it is shown in the Supplementary Material, in Section 3, that when computing the $z$-score, the classical standard deviation estimators, MLE and MAD,  are underperforming. In this paper, the BW (Biweight midvariance) estimator \cite{shoemaker1982robust} is used to implement the scoring function.

\subsection{Calibration set}
\label{sec:calibration}
Section \ref{sec:high-level} introduces the notion of calibration set by giving a high level description of the Breakpoint Based Anomaly Detector. It is a collection of data points representing the reference behavior, inspired by Conformal Anomaly Detection\cite{LaxhammarConformalanomalydetection2014, ishimtsev2017conformal}. It is built using data from the current segment, or from another segment in the history with a similar distribution probability compared to the current segment. The cardinality of the calibration set follows two constraints:
\begin{itemize}
	\item it should be large enough to ensure that the $p$-values are estimated with sufficient precision to generate a low false positive and false negative rate.
	\item it should not be too large to maximize the homogeneity of the data and to limit computation time.
\end{itemize}

\begin{figure}[tb]
	\centering
	\includegraphics[width=0.7\linewidth]{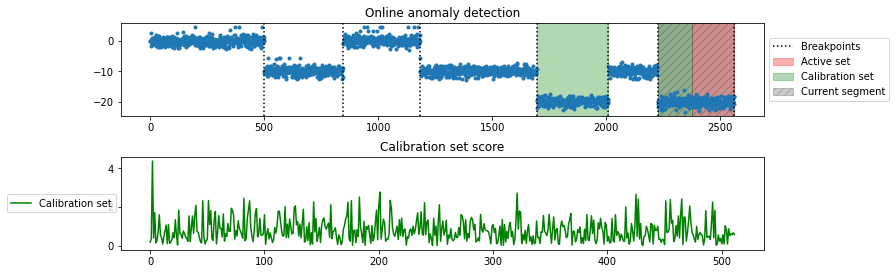}
	\caption{Illustration of the current segment, the active set and the calibration set.}
	\label{fig:calibration-set-illustration}
\end{figure}

As shown in Figure \ref{fig:calibration-set-illustration}, while data are collected online, the length of the current segment after the new breakpoint is too small to build the whole calibration set. By identifying similar segments and merging them to build the calibration set, the current segment can be completed with enough data points to estimate the $p$-values accurately.
Similar segments are found using a similarity function, like the Bhattacharyya distance proposed in \cite{bhattacharyya1943measure}. This similarity function is defined between two segments $S_1$, $S_2$ with means $\mu_1$ and $\mu_2$ and standard deviations $\sigma_1$ and $\sigma_2$ by:
\begin{equation}
	sim(\mathcal S_1, \mathcal S_2) = -\frac{1}{8\sigma^2}(\mu_1 - \mu_2)^2 - \frac{1}{2}\ln\frac{\sigma}{\sqrt{\sigma_1 \sigma_2}}
\end{equation}
The similarity function allows to sort all historical segments according to their similarity to the current segment. First, the similarity of each segment to the current segment is calculated. This allows to assign to each data point $X_u$ the variable that characterizes the similarity $sim_u$. By definition, the sequence $(sim_u)$ is constant on each segment $X_{\tau_i(t)}^{\tau_{i+1}(t)-1}$ and maximal on the current segment $X_{\hat b_t}^t$. 
\begin{align}
	\forall i \in \hat \llbracket 1, D_t\rrbracket, \forall u \in \llbracket \hat \tau_i(t), \hat \tau_{i+1}(t)-1\rrbracket, \quad sim_u = sim(X_{\hat \tau_i(t)}^{\hat \tau_{i+1}(t)-1}, X_{\hat b_t}^t) 
\end{align}
To build a calibration set of cardinality $n$, it is initialized using the scores of data points of the current segment that are not assigned to the active set. The data points scores with a ``normal'' status from the previous segments are added to the calibration set in descending order of similarity until $n$ scores are reached. 
After having described how a calibration set of a given cardinality $n$ is built, Appendix~\ref{sec:threshold} describes how the optimal cardinality $n$ is chosen.

\subsection{$p$-value estimation and threshold selection}
\label{sec:threshold}

As a reminder from Algorithm~\ref{alg:generalise-sbad}, the empirical $p$-values of each data point of the active set are computed using the calibration set. The threshold is chosen using the $p$-values of the active set to ensure the control of the FDR at a given level $\alpha$. Finally, the status of each data point of the active set is reevaluated comparing its $p$-value to the threshold.

In \cite{kronert2023fdr} we detail a new strategy for controlling the FDR of an anomaly detector in the online framework.
This goal is achieved by efficiently controlling the modified FDR criterion (mFDR) of subseries so that the FDR value of the full time series is controlled at the prescribed level $\alpha$.
To be more specific, \cite{kronert2023fdr} designs a modified version of the Benjamini-Hochberg procedure. Instead of applying BH to the active set with a slope $\alpha$, it is applied with a slope $\alpha'= \frac{\alpha}{1+\frac{1-\alpha}{m\pi}}$, where $m$ denotes the length of the active set, $\alpha$ is the desired global FDR level, and $\pi$ refers to the proportion of anomalies. Since $\alpha'$ depends on $\pi$, an estimation of $\pi$ (or expert knowledge) is required to detect anomalies. Some guidelines are provided in \cite{storey2004strong}. Notice that when $\pi$ is given, the fix threshold $\frac{\pi\alpha}{1+\pi - \alpha}$ control the FDR at level $\alpha$, this is equivalent to using BH with a subseries of length $m=1$.

\begin{figure}[tb]
	\centering
	\includegraphics[width=0.5\linewidth]{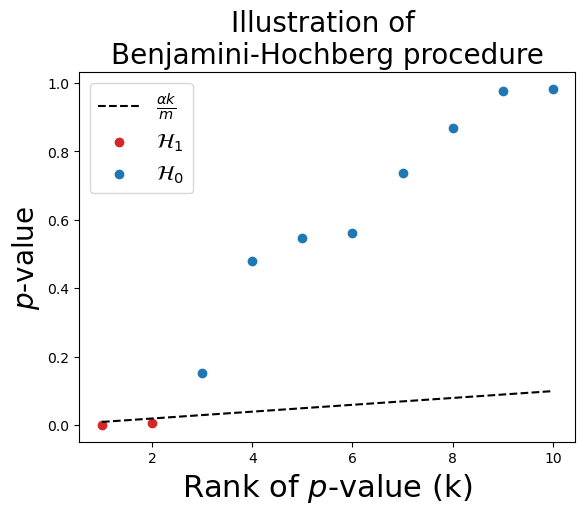}
	\caption{Example of Benjamini-Hochberg procedure.}
	\label{fig:illustrationbh}
\end{figure}

The calibration set is used to compute the $p$-values.
The FDR and the FNR of the modified BH procedure is very sensitive to the cardinality of the calibration set used to estimate the $p$-value. In \cite{kronert2023fdr}, we study under which conditions the cardinality of the calibration set ensures a control of the FDR. Given $m$ the cardinality of the active set and $\alpha'$ the modified slope for BH, the calibration set cardinality has to be chosen among:
\begin{equation}\label{eq:calsetcardi}
	n \in \lbrace \nu \frac{m}{\alpha'}  - 1, \quad \nu \in \mathbb{N}^*\rbrace
\end{equation}

As explained more deeply in \cite{kronert2023fdr}, the number of false negatives decreases with higher $\nu$. But a larger $\nu$ also increases the computation time, which can make any real-time decision difficult. We recommend to try different values of $\nu$, to monitor the decision time and to choose the largest $\nu$ which allows real time decisions.

\end{appendix}

\begin{supplement}
	\stitle{Supplementary Material for: Breakpoint based online anomaly detection}
	\sdescription{This Supplementary Material provides more details on the setup and results of the experiments presented in this paper.}
\end{supplement}
\end{document}